\documentclass[acmsmall,nonacm]{acmart}
\usepackage{graphicx}
\usepackage[T1]{fontenc} 
\usepackage[dvipsnames]{xcolor}
\usepackage{framed}
\usepackage{comment}
\usepackage{amsthm}
\usepackage{amsmath,amsfonts}
\usepackage{mathtools}
\usepackage{breakurl}           
\usepackage{url}                
\usepackage{enumitem}

\usepackage{tikz}
\usetikzlibrary{arrows.meta,positioning,fit,calc}

\usepackage[capitalise]{cleveref}
\usepackage{algorithm} 
\usepackage{algpseudocode}
\usepackage{float}
\usepackage{caption}
\usepackage{thm-restate}
\usepackage{xspace}
\usepackage{xurl}
\usepackage[T1]{fontenc}
\usepackage{subcaption}
\newcommand{\block}{\mathcal{B}} 
\newcommand{\reward}[1]{\mathcal{R}(#1)} 
\newcommand{\tran}{\tau} 
\newcommand{\tx}{\mathsf{tx}} 

\newcommand{\sstate}{\mathcal{S}} 
\newcommand{\st}{\mathsf{st}} 

\newcommand{\adv}{\mathcal{A}} 
\newcommand{\gas}{g} 
\newcommand{\maxgas}{\gas_{\mathsf{max}}} 
\newcommand{\gs}{\gas^*} 

\newtheorem{definition}{Definition}

\newcommand{\ray}[1]{{ [\color{orange} Ray: #1]}}

\newcommand{\ahad}[1]{{ [{\color{purple}{\sf Ahad:} #1}]}}

\newcounter{compactcorollary}
\renewcommand{\thecompactcorollary}{\arabic{compactcorollary}}

\crefname{compactcorollary}{corollary}{corollaries}
\Crefname{compactcorollary}{Corollary}{Corollaries}

\newcommand{\est}{\mathsf{Est}}

\newcommand{\cost}{\mathsf{\mathbf{Cost}}} 

\newcommand{\TaskSet}{\mathbb{T}}
\newcommand{\crRes}{\mathsf{CR}_\mathsf{Res}}
\newcommand{\crRew}{\mathsf{CR}_\mathsf{Rew}}

\def \ifempty#1{\def\temp{#1} \ifx\temp\empty }

\newcommand{\E}{\mathrm{E}}

\newcommand{\var}[1]{\textit{#1}}
\newcommand{\op}[1]{\textsl{#1}}

\newcommand{\CA}{\ensuremath{\mathcal{A}}\xspace}

\newcommand{\CR}{\ensuremath{\mathcal{R}}\xspace}

\title{The Tragedy of Chain Commons}
\subtitle{Decoupled Blockchains are Unfair}

\author{ Ignacio Amores-Sesar}
\affiliation{%
  \institution{Aarhus University}
  \country{Denmark}
}
\email{amores-sesar@cs.au.dk}

\author{Mirza Ahad Baig}
\affiliation{%
  \institution{Institute of Science and Technology Austria}
  \country{Austria}
}
\email{mbaig@ista.ac.at}

\author{Seth Gilbert}
\affiliation{%
  \institution{National University of Singapore}
  \country{Singapore}
}
\email{seth.gilbert@comp.nus.edu.sg}

\author{Ray Neiheiser}
\affiliation{%
  \institution{Institute of Science and Technology Austria}
  \country{Austria}
}
\email{ray.neiheiser@ist.ac.at}

\author{Michelle X. Yeo}
\affiliation{%
  \institution{Aarhus University and Nanyang Technological
  University}
  \country{Denmark}
}
\email{michellexyeo@gmail.com}

\begin{abstract}
Byzantine Fault Tolerant (BFT) consensus forms the foundation of many modern blockchains striving for both high throughput and low latency. 
A growing bottleneck is transaction execution and validation on the critical path of consensus, which has led to modular decoupled designs that separate ordering from execution: consensus orders only metadata, while transactions are executed and validated concurrently. 
While this approach improves performance, it can leave invalid transactions in the ledger, increasing storage costs and enabling new forms of strategic behavior. We present the first systematic study of this setting, providing a formal framework to reason about the interaction between consensus and execution. Using this framework, we show that the decoupled design enables a previously unidentified attack which we term gaslighting. We prove a fundamental trade-off between resilience to this attack and resource capacity utilization, where both are impossible to achieve deterministically in the decoupled model. To address this trade-off, we discuss an intermediate model for leader-based protocols that is robust to gaslighting attacks while achieving high throughput and low latency.
\end{abstract}

\ccsdesc[500]{Security and privacy~Distributed systems security}
\ccsdesc[500]{Applied computing~Electronic commerce}

\keywords{Blockchain, Parallel Execution, Decoupling, Performance Attacks}

\begin{document}

\maketitle

\section{Introduction}

The pursuit of high-performance blockchains has led to a series of novel protocols in recent years~\cite{narwhal,kauri,anthemius,blockstm} with the goal to achieve comparable performance to traditional financial systems, while achieving other desirable properties like decentralization and privacy.
A common feature of many of these new approaches is the \emph{decoupling} of consensus, block creation, and execution, often referred to as lazy or modular blockchains~\cite{lazyledger}. 
This architectural change has gained popularity as it substantially increases the execution capacity of the system, which has recently been identified as a major performance bottleneck~\cite{chiron,anthemius}.

In the coupled setting, consensus, execution, and block creation are strictly sequential. 
A block $B$ is proposed, agreed upon via consensus process $C$, and, upon successful agreement, executed. 
The next block cannot be proposed until execution process $E$ completes, placing execution on the critical path of consensus and causing longer execution times to directly increase latency.

In contrast, in the decoupled setting, consensus, execution, and block creation can proceed in parallel. Execution is no longer constrained to the interval between two consensus rounds, but can advance concurrently to consensus and block creation. This substantially increases the time available for execution, enabling higher throughput without affecting consensus latency. We highlight the differences between the two approaches in Figure~\ref{fig:coupleddecoupled}, where the decoupled model triples the available execution capacity. We also observe that in leaderless blockchains, such as DAG-based protocols~\cite{DBLP:conf/podc/KeidarKNS21,narwhal,mirbft,DBLP:conf/wdag/Amores-SesarGHO25,DBLP:conf/ndss/BabelCDKKKST25}, decoupling consensus from execution is not merely an architectural choice but a necessity, as it is essential to allow proposers to propose blocks in parallel. 


\begin{figure}[t!]
    \centering    \includegraphics[width=0.5\linewidth]{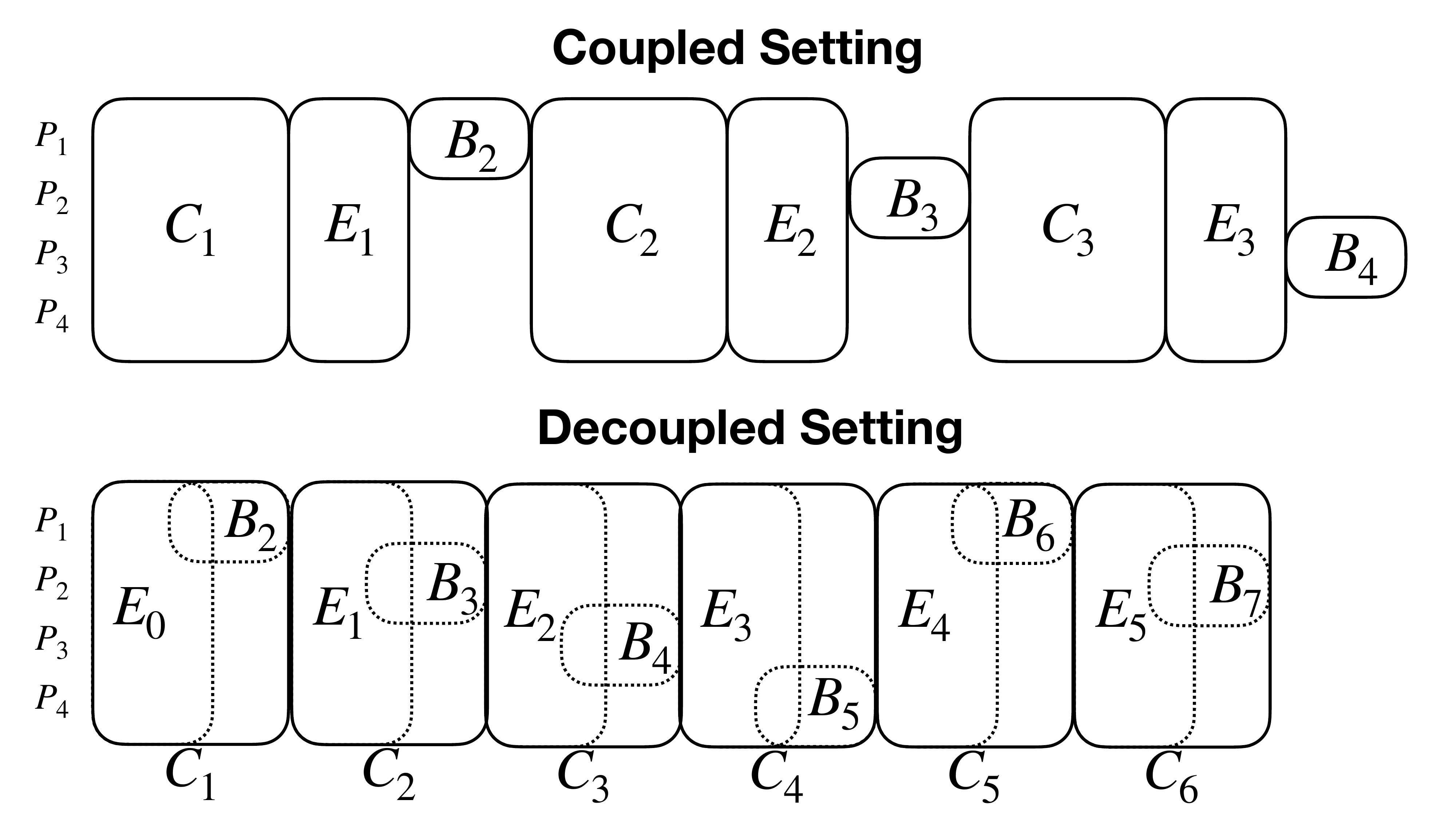}
    \caption{By moving execution and block construction from the critical path of consensus, the decoupled mode can produce blocks  faster in the same time frame  while also increasing the execution capacity. However, the parties do not know the exact state of the ledger at the time of creating a block}
    \label{fig:coupleddecoupled}
\end{figure}

However, this increase in execution capacity comes with an important trade-off:  
since execution is decoupled from consensus, the current block proposer does not have access to the full up-to-date state of the blockchain at the time of block creation, as this state can only be known after the execution of the previous block is completed.
For instance, in~\Cref{fig:coupleddecoupled}, the proposer of block $B_2$ in the coupled setting knows the full state of the blockchain as $B_2$ is proposed after $E_1$ (i.e., the execution of $B_1$) completed.
However, in the decoupled setting, block $B_2$ is proposed before $E_1$ is completed.
As a result, decoupled blockchains have a \textit{dirty ledger}~\cite{lazyledger}, where transactions may fail due to insufficient funds, or may even be included multiple times across different blocks.

More fundamentally, the lack of knowledge of the full state in the decoupled setting introduces unique challenges for blockchain mechanism design. 
A central component of blockchain mechanism design is that users pay a fee for their transactions to be included in a block by the proposer. This fee, commonly referred to as \emph{gas}, compensates for the resources consumed during transaction execution, including storage and computation. 
Furthermore, blocks are subject to a max gas capacity that bounds the total resource usage of the included transactions. Ideally, the block capacity should saturate the available system resources (e.g., CPU), filling each block up to this capacity with transactions. 
However, in the decoupled setting, the proposer cannot accurately determine at block creation time how much resources a transaction will actually consume during execution.

Based on this observation, we identify a new attack that is unique to the decoupled setting. 
The attack, which we term \emph{gaslighting}, is launched by malicious users who post transactions that appear to be costly but end up being cheap, thereby ``gaslighting'' the victim block proposer.
This attack exploits the fact that in decoupled blockchains the proposer lacks knowledge of the full state at block creation\footnote{We stress that we are primarily interested in architectural trade-offs and mechanism design, and not in computational optimization problems, such as the knapsack problem arising when constructing an optimal block from a set of transactions.}, and violates a fundamental desideratum of blockchains: \emph{fairness}, i.e., that rewards are distributed in a fair manner to block proposers, which consequently has a negative long-term impact on the security of the blockchain. This leads to our central research question: 

\smallskip{}
\textit{Are these trade-offs inherent to the decoupled model, or can they be circumvented through careful mechanism design?}
\smallskip{}

We answer this question in the negative.
In fact, in light of these trade-offs and the impossibility results we establish, we do not recommend that leader-based blockchains adopt the decoupled model. Instead, we advocate for alternative coupling models that preserve existing mechanism design guarantees, such as the approach introduced in this paper.


To justify our stance, we first introduce a novel framework to reason about the interplay of consensus and execution and show that there are fundamental tradeoffs inherent to the decoupled setting. We establish our first lower bound in~\Cref{lem:coupledCRRes}: in the decoupled setting, an adversary can launch gaslighting attacks that result in a drastic underutilization of system resources.
~
To make it worse, we show that it is impossible to reliably hold users accountable for this attack and that it can be executed at essentially zero cost.

We then prove a fundamental impossibility in~\Cref{lem:noRewardFairness} for decoupled blockchains: it is impossible to achieve reward fairness among block proposers. In particular, it is impossible to distinguish between (1) honest block construction in the presence of adversarial transactions in the mempool and (2) malicious block construction given an honest mempool. Consequently, reward fairness is impossible to achieve in this setting.

To reap the benefits of both the decoupled and coupled model, we propose a \emph{partially coupled} model that combines the advantages of the two approaches. The key observation underlying our design is that, at block creation time, access to the full blockchain state is unnecessary. Instead, it suffices to know only the subset of state that are modified by transactions selected for inclusion in a given block, as transactions usually modify only a small part of the state. 

Consequently, as long as there are sufficiently many transactions that do not conflict with transactions currently pending in consensus or execution, consensus, execution, and block creation can proceed in parallel. Coupling is required only at the level of individual transactions whose state dependencies overlap.
As a result, our partial coupled model maximizes execution throughput while avoiding the aforementioned attack inherent to the fully decoupled design.

Finally, to demonstrate the practical feasibility of gaslighting attacks, we present a case study showing that overestimation of gas costs is extremely common in decoupled blockchains, and that the economic impact of such attacks could exceed \$300 million.


\smallskip{\bf Our contribution.} We summarize our contribution as follows:

\begin{itemize}
    \item We provide the first formal analysis of the interaction between execution and consensus in decoupled blockchains, and develop a formal framework to reason about their performance and security properties.

   \item Using this framework, we identify a previously overlooked attack vector, which we term \textit{gaslighting}, and prove that it is inherent to the decoupled model~(\Cref{lem:decoupledCRRes,lem:noRewardFairness}).

    \item Furthermore, we show in~\Cref{lem:lowCRRewActualCost,lem:noCostEnforceability}, that users can execute these attacks at essentially zero cost.

    \item We analyze the implications of our theoretical results on practical blockchain architecture design, and introduce a novel \emph{partially coupled} model that is \emph{robust against these attacks} while achieving comparable execution throughput to fully decoupled systems.

    \item Finally, we present a \emph{case study} using real-world data from Sui and Ethereum, demonstrating the practical feasibility of the identified attack, and show the economic damage that could be inflicted on the respective ecosystem. 
    
    
\end{itemize}

\subsection{Related Work} 

\smallskip{\bf Resource allocation.} 
Resource allocation problems have been widely studied in the fields of economics~\cite{mankiw}, mechanism design~\cite{AuYongCSV04,drf,ChowduryLGS16}, operations research~\cite{JohariT04}, algorithms~\cite{rap,CohenKMZ19,DongGS25} and distributed computing~\cite{ReifS82resource,StyerP88}.
Our work extends this rich line of studies and analysis to distributed and decentralized resource allocation, particularly to the blockchain execution setting, where the underlying resource is gas.
A common theme running through these classic resource allocation problems is that the amount of resource needed by each agent to complete their task is known during optimization time, and the goal of the resource allocation problem is to maximize desired properties like social welfare or fairness. 
Our work is most related to recent work that look at the resource allocation problem in the asymmetric~\cite{HarrisT81uncertain} or incomplete information~\cite{MARAR200950} setting, where some costs are unknown at optimization time. 
Unlike these works that use parametric distributions to model uncertainty, we assume completely unknown information  in the blockchain execution setting, and show fundamental impossibilities in this setting.
Additionally, our work relates to prior studies on inefficiencies arising from decentralized system design, often discussed through the lens of the price of anarchy~\cite{JohariT04,selfish}. 
However, in contrast to this line of work, we focus on the concrete interplay between consensus and execution and the fundamental properties resulting from different architectural design choices.

\smallskip{\bf Decoupled blockchains.}
Many recent consensus algorithms rely on the decoupled blockchain model, where ordering and execution of transactions proceed in parallel. 
This approach is taken further by DAG-based and other leaderless protocols where blocks are proposed in parallel without knowledge of the state of concurrent proposals~\cite{hyperledger,DBLP:conf/podc/KeidarKNS21,narwhal,mirbft,DBLP:conf/wdag/Amores-SesarGHO25,DBLP:conf/ndss/BabelCDKKKST25}.
Other recent models have proposed some degree of execution in tandem to consensus, based on models of speculative~\cite{hellingsGRCSS25poe}, optimistic~\cite{sei}, and delayed execution~\cite{eipdelayed}.
Fundamental to decoupling consensus and execution is the notion of the consensus number of an object~\cite{consensusnumber,herlihy91}, where objects with a consensus number of 1 can be trivially decoupled. 
Our work explores the limits of existing mechanisms under the decoupled model, and describes and analyzes new types of attacks that arise from these limitations. 
In addition, we formally introduce a novel model of partial coupling which lies at the intersection of the coupled and decoupled designs and provide an analysis and comparison (to decoupled models) of our partially coupled model with respect to performance and fulfilling desired properties.

\smallskip{\bf Dirty/lazy ledgers.}
The dirty (or lazy) ledger model has motivated several works analyzing its implications, especially for light clients. 
Since the consensus output alone is no longer sufficient to determine validity, new techniques are required to ensure safety for light clients in this setting~\cite{lazyledger,10.1007/978-3-031-78679-2_1,StefoXK23}. 
Our work adopts the aforementioned security assumptions for light clients, as our proposed partially coupled model produces a dirty ledger.

\smallskip{\bf Gas pricing mechanisms.} 
In the blockchain setting, gas pricing and its fundamental role in mitigating performance attacks have been studied extensively from different perspectives, including parallel execution~\cite{liobaparallel}, multidimensional fee markets~\cite{diamandis_et_al:LIPIcs.AFT.2023.4,angeris2024multidimensionalblockchainfeesessentially,eippricing,eipmulti,KiayiasKPZ25,LaveeNPR25}, theoretical analysis of fee mechanisms~\cite{Paramitha_Tarigan_2025,10.3389/fbloc.2024.1462666,leaderlessrewards}, and impact on off-chain bribing\cite{ganesh2025characterizingoffchaininfluenceproof}. 
These works all study and propose gas pricing mechanisms in the coupled setting. 
In contrast, in our work we show fundamental trade-offs that are independent of the gas pricing mechanism design.





\section{Model and Preliminaries}\label{sec:prelims}
\smallskip{\bf Blockchain abstraction.}\label{sec:model}\label{sec:blockchain}
A blockchain protocol is a round based protocol run by a set of $N$ validators $p_1,\ldots,p_N$ and consists of three components: (1) \emph{block creation} $B$, which selects transactions from the mempool, (2) \emph{consensus} $C$, which creates blocks, an ordered set of transactions, and (3) \emph{execution} $E$, which deterministically applies transactions of a block to update the state. For the sake of simplicity, we assume that execution outputs the complete state of the ledger. These components iterate over rounds. We consider protocols $\Pi$ that safely implement atomic broadcast~\cite{bible}, ensuring agreement on a sequence of blocks. We treat atomic broadcast as a black-box abstraction and study two integration models based on the interaction between block creation, consensus, and execution.

First, some basics. A transaction $\tx$ represents a state transition the clients want to do. In our model, each transaction $\tx$ comes with an estimate of maximum resource it is going to consume $\est(\tx)$ and a price $p(\tx)$. We explain the resource measurement and price later in this section. Thus we write a transaction as a tuple $\tx = (\mathsf{content}(\tx),\est(\tx),p(\tx))$.

In a \emph{coupled blockchain}, block creation, consensus, and execution are strictly sequential and lie on a single critical path. Transactions are executed during consensus, and block creation begins only after the previous block is finalized and executed. In mathematical terms, this means that in round $i$, consensus $C_i$ takes as input the output of consensus $C_{i-1}$ and execution $E_{i-1}$ from the previous round and the block from the current round $B_i$ (see~\Cref{fig:coupled}).

\begin{figure}[htbp]
    \begin{minipage}{0.45\linewidth}
        \begin{align*}
            B_i(st_i, M_i) &\to \block_i\\
            C_i(st_i, \block_i) &\to \\
            M_{i+1} &\to (M_i\setminus \block_i) \cup \mathsf{NewTxs}_i\\
            E_i(st_i, \block_i) &\to st_{i+1} 
        \end{align*}
        \caption{Coupled Setting}\label{fig:coupled}
    \end{minipage}
    \hfill
    \begin{minipage}{0.45\linewidth}
        \begin{align*}
            B_i(M_i) &\to \block_i\\
            C_i(\block_i) &\to \\
            M_{i+1} &\to (M_{i}\setminus \block_i) \cup \mathsf{NewTxs}_i\\
            E_i(st_i, \block_i) &\to st_{i+1}
        \end{align*}
        \caption{Decoupled Setting}\label{fig:decoupled}
    \end{minipage}
\end{figure}

In a \emph{decoupled blockchain}, block creation and consensus proceed independently and asynchronously of execution. That is, block creation and consensus do not depend on the most up to date execution output. Execution is asynchronous and off the critical path. Mathematically, in round $i$, consensus $C_i$ takes as input the output of consensus $C_{i-1}$  from the previous round and the block from the current round $B_i$ (see~\Cref{fig:decoupled}). In contrast to the coupled model, consensus is oblivious from the result of execution. Decoupling is commonly used to improve throughput, but its benefits are not well quantified. We model the system using execution time $\delta^{e}$, consensus time $\delta^{c}$, and block creation time $\delta^{b}$. In the coupled model, the slot time is $\delta=\delta^{e}+\delta^{c}+\delta^{b}$. In the decoupled model, execution can utilize the full slot, yielding a speed-up of $\beta=\delta/\delta^{e}$. For example, with $\delta^{c}=600$ ms, $\delta^{e}=200$ ms, and $\delta^{b}=200$ ms, we obtain $\delta=1000$ ms and $\beta=5$.


\smallskip{\bf Execution resources.}
Execution resources represent the costs incurred when processing a transaction, including CPU time, storage, and bandwidth. 
Formally, for transaction space $\tran$ and state space $\sstate$, we define a gas function $\gas : \tran \times \sstate \rightarrow [0,1]$ that maps a transaction $\tx$ and state $\st$ to its cost; we write $\gas(\tx)$ when $\st$ is implicit. This function aims to capture all the costs associated with the transaction life-cycle such as computation, storage, bandwidth, etc. To a reader familiar with blockchains, the output of a gas function is usually in $\mathbb{N}$ but every transaction has a maximum gas limit. To simplify our model and analysis, we can normalize the maximum gas limit per transaction and consider the domain of $[0,1]$ instead. Correspondingly we also require that $\est(\tx) \in [0,1]$. 

\smallskip{\bf Mempool and block creation.} A mempool is a set of pending transactions. Clients submit their transactions to the mempool. Validators propose blocks by picking transactions from the mempool during the block creation process. We denote by $M_i$ the state of mempool at the start of each block creation round. The new transactions submitted by the clients after the start of the previous block creation round ($B_{i-1}$) are denoted by the set $\mathsf{NewTxs}_i$. 

A validator is selected uniformly at random to be the next scheduler who creates the block via Block Creation $B$. Block creation deterministically outputs a set of transactions, called a block $\block_i \subset M_i$, such that $|\block| \leq N$. Additionally, there is a constraint on the total gas for all the transactions in the block. In the coupled model $\sum_{\tx \in \block_i} \gas(\tx,\st_{i-1}) \leq \maxgas$, where $1 \leq \maxgas \leq N$. While in the decoupled model since we do not have access to the state, we require that the $\sum_{\tx \in \block_i} \est(\tx) \leq \maxgas$. Lastly, there is a non-triviality condition that $|\block_i| \geq \min(\lfloor\maxgas\rfloor,|M_i|)$

\smallskip{\bf Transaction and execution model.}
We use the execution model inspired by Ethereum, Solana, Sui, and Aptos~\cite{GrechKJBSS20madmax,aptos,sui,solana}. 
For a transaction $\tx$, let $\gas(\tx, \st)$ denote the actual  gas cost of executing $\tx$ when the state is $\st$. 
If $\gas(\tx,\st) \leq \est(\tx)$, then $\tx$ would be executed.
If $\est(\tx) < \gas(\tx,\st)$, the transaction will be partially executed until the amount $\est(\tx)$ is used up, and then rolled back, i.e., there is no state change from transaction $\tx$. 
In general, blockchains support Turing-complete expressivity in the transactions; we require only that the transactions support if-conditions. 
Thus, it is possible for the same transaction $\tx$ to have $\gas(\tx, \st) \neq \gas(\tx,\st')$ depending on the state of execution.
We also assume that $\st$, the state of the execution, is large and a single transaction can only change a bounded number of bits of $\st$. Formally, consider two blocks $\block$ and $\block'$ such that $\block \triangle \block' = \{\tx\}$, then for any state $\st$, $\delta_{Hamming}((E(\block, \st), E(\block', \st)) << |\st|$.

We abuse the notation to say gas of a block as $\gas(\block, \st) = \sum_{\tx \in \block} \gas(\tx, \st)$.

\smallskip{\bf Client cost model.} To cover resource usage and incentivize proposers, clients are charged a fee per transaction. For a transaction $\tx$ in state $\st$, let $\cost(\tx,\st)$ denote its cost. Currently, all major blockchains charge $\cost(\tx, \st) = c_{base}+p(\tx)\cdot \gas(\tx, \st)$, where $c_{base}$ is a small amount. Think of $c_{base}$ as a small fee to do basic processing of the transaction. We can assume $c_{base} = 0$ as it does not qualitatively hurt our results.

Looking ahead, in our attacks in~\cref{sec:attacks}, we use the current cost model. Later, when we suggest mitigations in~\cref{sec:discussion}, we will consider alternative cost models as well. 

We abuse the notation to say cost of a block is $\cost(\block, \st) = \sum_{\tx \in \block} \cost(\tx, \st)$. 

\smallskip{\bf Enforceability of cost.} A crucial ingredient in some of our attacks is the ability of the adversary to avoid having to pay for the costs incurred in the cost model. This is particular to the decoupled model, and it does so by transferring all the money out of its account while the attack is in progress. We describe this in greater detail in the attack section~\cref{sec:attacks}.    



\smallskip{\bf Adversarial model}
The adversary denoted by, $\adv$, is allowed to control all the transactions in the mempool. 
It can submit an unlimited number of transactions to the mempool at any round. 
Its transactions can have any resource estimate, $\est(\tx) \in [0,1]$ or any $p(\tx) \in \mathbb{R}_{+}$ as long as there are enough transactions with high resource usage. 
The notion of "enough" is defined formally for a congested setting in~\cref{def:congested}.

\subsection{Definitions}\label{sec:desiderata}

Block execution capacity is a common resource with high demand. A highly desirable property is to fully utilize the block capacity in each round.

\begin{definition}[Block Capacity Utilization]
\label{def:bCapUtil}
    For a block $\block_i$ with the execution state $\st_i$, we define block capacity utilization ratio as $\crRes = \frac{\sum_{\tx \in \block_i} \gas(\tx, \st_i)}{\maxgas}$.
\end{definition}

\begin{definition}[Maximum Possible Block Capacity Utilization]
\label{def:bCapUtilMax}
    For mempool in round $M_i$ with the execution state $\st_i$, we define the maximum possible block capacity utilization as $\crRes^* = \frac{\gas(\block_i, \st_i)}{\gas(\block^*_i)}$, where $\block^*_i$ is the block with maximum gas if $\st_i$ is known at the time of block creation.  
\end{definition}

\begin{definition}[Maximum Possible Block Cost Ratio]
\label{def:bCostMax}
    For mempool in round $M_i$ with the execution state $\st_i$, we define the maximum possible block cost ratio as $\crRew^* = \frac{\cost(\block_i, \st_i)}{\cost(\block^*_i, \st_i)}$, where $\block^*_i$ is the block with maximum possible cost if $\st_i$ is known at the time of block creation.
\end{definition}


The block capacity utilization and block cost are affected by the state of the mempool. We want to define two different conditions of the mempool state. 

\begin{definition}[Congested Mempool]
\label{def:congested}
    The mempool for round $i$, $M_i$, is called congested if $|M_i| \geq N$ and there exists $S \subset M_i$ such that $|S|\leq N$ and $\gas(S) = \maxgas$, for the coupled setting. Alternatively, for the decoupled setting a mempool is called congested if $|M_i| \geq N$ and there exists $S, S' \subset M_i$ such that $|S|, |S'| \leq N$, $\est(S) = \maxgas$ and $\gas(S', \st_{i}) = \maxgas$.
\end{definition}

\begin{definition}[Reward]
\label{def:reward}
    The reward for a validator for creating a block $\block$ is $\reward{\block} = \sum_{\tx \in \block} \cost(\tx,\st)$.
\end{definition}

An important desideratum of blockchain protocols is \emph{reward distribution fairness}, which requires that long-term rewards are similar for each validator. This prevents malicious users from accumulating more rewards and hence power, which can compromise the security of the protocol.

\begin{definition}[Reward distribution fairness]
\label{def:rewardFairness}
    Let $\rho_i$ denote the cumulative accumulated reward by validator $V_i$ over a long period of time. A protocol $\Pi$ satisfies \emph{reward distribution fairness} if $\rho_i/\rho_j \approx 1$ for all $i,j \in [n]$. 
\end{definition}

\begin{definition}[Cost Enforceability]
    A protocol is said to have \emph{cost enforceability} if there is a way to enforce the payment of $\cost$ of a transaction $\tx \in \block_i$ \textit{i.e.} in state $\st_i$ the client's resource is deducted by $\cost(\tx,\st_i)$. 
\end{definition}


A secondary aim, but not less relevant, is to guarantee that our modifications do not harm the performance of the blockchain protocol. For this goal we consider the framework intorduced by Amores-Sesar~and~Cachin~\cite{DBLP:conf/esorics/Amores-SesarC24}.
\begin{definition}[Run]
  A \emph{run} is a history with an entry for each round containing the actions, a list of received messages, and a list of sent messages by each process in that round.
 \end{definition} 
This concept is commonly referred to in the distributed algorithms and systems literature as \emph{execution}. However, we use the term \emph{run} to avoid confusion with the execution of transactions.

\begin{definition}[Throughput of runs]
Given a protocol $\Pi$, an adversary \CA, and a run $\CR$, we define the \emph{throughput} of $\Pi$ in the presence of \CA in run $\CR$ as the average gas of executed transactions per round, and we denote it by $\op{throughput}(\Pi,\CA,\CR)$. 
\end{definition}

\begin{definition}[Throughput]
Given a protocol $\Pi$, the \emph{throughput of $\Pi$} is defined as $\op{throughput}(\Pi)$ $:= \inf_{\CA} \E[\op{throughput}(\Pi,\CA,\CR)]$, i.e., 
the infimum over all possible adversaries \CA of the average over the randomness of $\Pi$ of $\op{throughput}(\Pi,\CA,\CR)$ over all possible runs.  
\end{definition}


\begin{definition}[Transaction latency]
Given a protocol $\Pi$, an adversary $\CA$, a run $\CR$, and a transaction $\var{tx}$, we define the \emph{latency} of $\var{tx}$ in the presence of adversary \CA in run $\CR$ as the number of rounds since $\var{tx}$ is $\op{ba-broadcast}$ until the first block containing $\var{tx}$ is executed, and we denote it by $\op{latency}(\Pi,\CA,\CR,\var{tx})$. We define the \emph{latency of $\Pi$} to be the average number of rounds, over the transactions $\var{tx}$ in run $\CR$, since $\var{tx}$ is $\op{ab-broadcast}$ until the first block containing $\var{tx}$ is $\op{bab-delivered}$, and denote it by $\op{latency}(\Pi,\CA,\CR)$.
\end{definition}

\begin{definition}[Latency]
Given a protocol $\Pi$, the \emph{latency} of protocol $\Pi$ is defined as $\displaystyle\op{latency}(\Pi)=\sup_{\CA} \E[\op{latency}(\Pi,\CA,\CR)]$, i.e., the supremum over all possible adversaries \CA of the average over the randomness of the protocol of $\op{latency}(\Pi,\CA,\CR)$ over the possible runs $\CR$. 
\end{definition}

\section{Attacks on the Commons}\label{sec:attacks}
In the following we outline a novel attack we identified in the decoupled model.

In the \emph{gaslighting} attack, the attacker aims to mislead a proposer into accepting transactions that appear to consume significantly more resources (in gas) than they actually do. 
As a result, the proposer includes fewer transactions in the block, believing the capacity constraints have been reached. 
Because the transactions in reality consume far fewer resources than declared, this reduces block capacity utilization, diminishes the proposer’s rewards, and violates reward-distribution fairness. The attacker targets an honest proposer (the \emph{gaslit proposer}) by flooding the mempool with transactions that appear expensive but yield much lower actual rewards. 
As in the decoupled setting, clients may honestly overestimate gas costs it is impossible to distinguish between honest and malicious clients~(see \cref{app:casestudy}). 
This enables attacks that reduce the target proposer's rewards while leaving a larger share of system resources available to other proposers. 
We demonstrate the practical feasibility of this attack in~\cref{app:casestudy} and show that, if Ethereum were to adopt the decoupled model, an attacker could cause aggregate losses exceeding \$300 million.


\subsection{Lower Bounds}
\label{sec:lowerBounds}

This vulnerability leads to fundamental lower bounds on resource utilization and reward fairness in both the coupled and decoupled settings and we derive impossibility results for achieving certain desired properties in the decoupled setting that do not arise in the coupled one.

First, we show that in the coupled setting, we can obtain the best possible resource utilization. 

\begin{restatable}[]{lemma}{coupledCRRes}
\label{lem:coupledCRRes}
    There exists a block-creation algorithm $B$ such that, if the mempool is congested, then the block created by $B$ has $\crRes = 1$ in the coupled setting. 
\end{restatable}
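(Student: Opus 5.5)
The proof is constructive: I exhibit an explicit block-creation algorithm $B$ and verify the three block constraints of the coupled model for its output. The feature of the coupled setting that makes this possible is visible in \Cref{fig:coupled}: block creation $B_i(\st_i, M_i)$ receives the up-to-date state $\st_i$. So for every $\tx \in M_i$ the proposer can compute $\gas(\tx,\st_i)$ exactly, by evaluating the gas function (equivalently, by simulating execution against $\st_i$).

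I define $B$ as follows. Given $(\st_i, M_i)$, it first tests congestion by searching for a subset $S \subseteq M_i$ with $|S| \le N$ and $\gas(S,\st_i) = \maxgas$. If such an $S$ exists, $B$ outputs $\block_i = S$, fixing a deterministic tie-breaking rule (for instance, the lexicographically smallest such $S$ under a canonical order on transactions) so that $B$ is deterministic as the model requires. Otherwise $B$ falls back to any rule that respects the constraints, say a greedy choice by decreasing $\gas(\tx,\st_i)$. The lemma says nothing about that fallback case.

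The search is a brute-force enumeration over subsets of size at most $N$, which is finite, so $B$ is well defined. The paper's footnote explicitly sets aside computational (knapsack) concerns, so the running time plays no role here.

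Next I check that the output is a valid block when the mempool is congested. By \Cref{def:congested}, congestion in the coupled setting guarantees that such an $S$ exists. Then:
\begin{itemize}
\item $\block_i = S \subseteq M_i$ and $|\block_i| \le N$ hold by construction.
\item The gas constraint holds with equality: $\sum_{\tx\in\block_i}\gas(\tx,\st_i) = \maxgas$.
\item For non-triviality, each $\gas(\tx,\st_i)\in[0,1]$, so reaching total gas $\maxgas$ requires $|S| \ge \lceil \maxgas\rceil \ge \lfloor\maxgas\rfloor \ge \min(\lfloor\maxgas\rfloor, |M_i|)$.
\end{itemize}

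Finally, by \Cref{def:bCapUtil}, $\crRes = \gas(\block_i,\st_i)/\maxgas = 1$.

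The main obstacle is an indexing mismatch in the model, not the combinatorics. The coupled gas constraint is written with $\st_{i-1}$, while $B_i$, $E_i$ and \Cref{def:bCapUtil} all use $\st_i$. I would resolve this by reading the state passed to $B_i$ as the same state against which $\block_i$ is executed, namely $\st_i$, the output of $E_{i-1}$. Once the proposer's state and the execution state coincide, the gas $B$ computes equals the gas actually consumed. I will also note that, because transactions are executed in order, the gas of a block could in principle depend on intermediate states within it. The paper defines $\gas(\block,\st)$ additively as $\sum_{\tx \in \block}\gas(\tx,\st)$, so I work directly with that definition.
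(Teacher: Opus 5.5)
Your proposal is correct and matches the paper's proof. The paper likewise has $B$ select the subset $S$ guaranteed by congestion, using a knapsack oracle where you use brute-force enumeration, and sets $\block_i = S$ so that $\crRes = 1$; your extra checks on determinism, non-triviality, and the $\st_{i-1}$ versus $\st_i$ indexing are sound details that the paper omits.
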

\begin{proof}
    See Appendix~\ref{app:coupledCRRes}
\end{proof}

While the above lemma shows that shared resources are fully utilized in the coupled setting, this does not hold in the decoupled setting. We show that any block creation algorithm can be severely undermined by a gaslighting attack.

\begin{restatable}[]{lemma}{decoupledCRRes}
\label{lem:decoupledCRRes}
    For any block creation algorithm $B$, there exists an adversary that can make $\crRes = \crRes^* = 0$ for a block in the decoupled setting, if the mempool is congested and execution is at least two rounds behind the consensus. 
\end{restatable}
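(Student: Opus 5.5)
Since the lemma quantifies over all $B$ and only asks for an adversary, we fix an arbitrary deterministic block-creation algorithm $B$ and build the mempool around it. We exploit the one feature of the decoupled setting that matters here: $B_i$ sees only $M_i$, never $\st_i$. Because execution lags consensus by at least two rounds, the state $\st_i$ under which $\block_i$ will run is determined by blocks that are already ordered (or being ordered) but not yet executed. In particular, it depends on $\block_{i-1}$, which is fixed before $\block_i$ executes but after the adversary has committed the transactions of $M_i$.

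\emph{Construction.} In round $i-1$ the adversary submits a ``switch'' transaction $\tx_0$. Executing $\tx_0$ flips a flag bit $f$ in the adversary's account, which modifies only a bounded number of bits of the state, as the model requires. It also submits a pool $T$ of at least $N$ transactions, each with $\est(\tx)=1$ and content of the form ``if $f=0$ then burn gas $1$ else halt immediately''. Since the domain of $\gas$ is $[0,1]$, we take the gas of the halting branch to be $0$. Before the flip, these transactions satisfy $\gas(\tx,\st)=1=\est(\tx)$, so the mempool looks congested: any $\lfloor\maxgas\rfloor$ of them give $\est(S)=\maxgas$. Depending on whether $\maxgas$ is an integer, one fractional-estimate transaction may be needed to make the sum exactly $\maxgas$. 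Evaluated in the stale state, they also realise $\gas(S',\cdot)=\maxgas$. To be safe we make every transaction in $M_i$ of this conditional form, with the same estimate. Then whatever subset $\block_i$ the algorithm $B$ picks (and by non-triviality it is nonempty), every $\tx\in\block_i$ has $\gas(\tx,\st_i)=0$ once $\tx_0$ has been executed before $\block_i$. It follows that $\crRes=0$. Moreover, $\crRes^*=\gas(\block_i,\st_i)/\gas(\block_i^*)=0$ as long as the optimum in state $\st_i$ is positive, which we secure by including one honest-looking transaction with constant gas that $B$ may or may not choose.

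\emph{The subtle point} is congestion. Definition~\ref{def:congested} asks for $S'$ with $\gas(S',\st_i)=\maxgas$, and this must hold in the state $\st_i$ itself, not the stale state. We handle this by adding to $M_i$ further transactions with the opposite condition (``if $f=1$ burn gas $1$''), but giving them an estimate that $B$ cannot distinguish from the others. Because $B$ is deterministic and sees only $M_i$, the adversary can simulate $B$ on $M_i$ in advance. It therefore knows exactly which set $\block_i$ will be chosen and can assign the opposite-polarity conditions to transactions outside $\block_i$. Doing so keeps the mempool congested in $\st_i$ while $\gas(\block_i,\st_i)=0$. This simulation against a fixed $B$ is the key step and, I expect, the main obstacle to state cleanly.

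Finally, we must ensure $\tx_0$ is ordered and executed before $\block_i$ but after $B_i$ has fixed $\block_i$. The two-round lag gives exactly this: $\tx_0$ is placed in $\block_{i-1}$, which is ordered but unexecuted when $B_i$ runs. The proposer therefore cannot observe the flip when it builds $\block_i$.
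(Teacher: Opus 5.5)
Your outline is the same as the paper's. One adversarial set is selected by $B$ and burns no gas in $\st_i$. A disjoint set has $\est=\gas=1$ in $\st_i$, which keeps the mempool congested and makes $\gas(\block^*_i)>0$. Determinism of $B$ is what steers $B$ onto the first set; the paper obtains $B(M_{i-1})=\block_i\subseteq S_{\adv}$ directly from determinism at that point.

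\textbf{The gap.} It is at exactly the step you flagged, and as written that step fails. The polarity of a transaction (``if $f=0$'' versus ``if $f=1$'') is part of $\mathsf{content}(\tx)$, which $B$ reads. So you cannot simulate $B$ on $M_i$ and afterwards assign polarities to the transactions outside $\block_i$: that changes $M_i$, and possibly $B$'s output. Giving both kinds equal estimates does not help, since nothing in the model restricts $B$ to looking only at $\est$ and $p$. Nor need a consistent assignment exist. Take a deterministic $B$ that always includes at least one transaction of each visible polarity. With a single flag $f$, whatever value $f$ holds when $\block_i$ executes, some transaction of $\block_i$ burns gas, so $\crRes>0$. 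If instead $M_i$ contains only one polarity, nothing in it has positive gas in $\st_i$, and congestion in $\st_i$ fails.

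\textbf{The fix.} Hide the polarity in the state rather than in the content.
\begin{enumerate}
\item Let the $j$-th adversarial transaction be ``if $f_j=0$ burn gas $1$, else halt'', with all flags initially $0$ and all transactions having the same estimate and price. Now $M_i$ is completely fixed before any polarity is decided.
\item Simulate $B$ on this $M_i$ to obtain $\block_i$.
\item Only then choose the switch $\tx_0$, which sets $f_j=1$ exactly for $\tx_j\in\block_i$; that is at most $N$ bits, still $\ll|\st|$.
\end{enumerate}

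\textbf{Placing $\tx_0$.} You also need two things your write-up assumes but does not justify. First, the content of $\tx_0$ must not feed back into $M_i$. Second, $\tx_0$ must actually land in $\block_{i-1}$. Neither is automatic for an arbitrary $B$ when $T$ sits in $M_{i-1}$ beside $\tx_0$. Take $M_{i-1}=\{\tx_0\}$ with a small estimate; since $\maxgas\ge 1$, non-triviality forces $\block_{i-1}=\{\tx_0\}$. Then submit the pool as $\mathsf{NewTxs}_i$, so that $M_i=\mathsf{NewTxs}_i$. This handles both.

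\textbf{The extra transaction.} Drop the ``honest-looking transaction that $B$ may or may not choose''. If $B$ picks it, then $\crRes>0$. It is also unnecessary: the unselected transactions, whose flags stay $0$, already give congestion in $\st_i$ and $\gas(\block^*_i,\st_i)>0$.
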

\begin{proof}
    See Appendix~\ref{app:decoupledCRRes}
\end{proof}

In fact, not only can an adversary make $\crRes = 0$, it can do so at no cost under the current model, as we see next. 

\begin{restatable}[]{lemma}{lowCRRewActualCost}
\label{lem:lowCRRewActualCost}
    Even if there is cost enforceability, then the cost to the adversary of an attack in~\cref{lem:decoupledCRRes} in the current cost model is $0$ and $\crRew^* = 0$.
\end{restatable}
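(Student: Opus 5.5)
We take the adversarial construction behind \cref{lem:decoupledCRRes} and read off its cost under the current cost model. The construction should have the following shape. Because execution lags consensus by at least two rounds, the proposer of $\block_i$ only sees $\st_{i-2}$ (or older). The adversary fills the mempool with transactions $\tx$ that have $\est(\tx)=1$, so that any block satisfying $\sum_{\tx\in\block_i}\est(\tx)\le\maxgas$ and the non-triviality condition consists of such transactions. Each $\tx$ contains an if-condition on a state variable $x$ (say a flag in the adversary's account) that the adversary controls:
\begin{itemize}
\item if $x$ is unset, $\tx$ performs expensive work, so $\gas(\tx,\st_{i-2})=1$, and the transaction looks honest to the proposer;
\item if $x$ is set, $\tx$ aborts immediately, so $\gas(\tx,\st_i)=0$.
\end{itemize}
The adversary sets $x$ with a single transaction placed in $\block_{i-1}$, which is unexecuted when $\block_i$ is built. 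Hence $\gas(\block_i,\st_i)=0$, which gives $\crRes=\crRes^*=0$. To keep the mempool congested per \cref{def:congested}, the adversary also supplies a set $S'$ with $\gas(S',\st_i)=\maxgas$. I will make sure the construction is fixed this way, or restate it.

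\textbf{Step 1: the cost of the attack.} Under the current model, $\cost(\tx,\st)=c_{base}+p(\tx)\gas(\tx,\st)$ with $c_{base}=0$. Even with cost enforceability, the adversary is charged exactly $\cost(\tx,\st_i)=p(\tx)\cdot 0=0$ for each gaslighting transaction in $\block_i$. The flag-setting transaction in $\block_{i-1}$ can be made to consume negligible gas; more cleanly, the flag can be the balance or nonce effect of an ordinary transfer with $\gas\to 0$ in the normalized model. If that transaction must cost something, I will instead argue that its cost is a fixed constant $\epsilon$ amortized over arbitrarily many attacked blocks, and so tends to $0$. Either way, the total cost is $\sum_{\tx\in\block_i}\cost(\tx,\st_i)=0$, up to the negligible setup term.

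\textbf{Step 2: $\crRew^*=0$.} The numerator is $\cost(\block_i,\st_i)=\sum p(\tx)\gas(\tx,\st_i)=0$. For the denominator, the congested mempool contains $S'$ with $\gas(S',\st_i)=\maxgas>0$ and $|S'|\le N$, and $S'$ is a feasible block when $\st_i$ is known. Giving $S'$ positive prices yields $\cost(\block^*_i,\st_i)\ge\cost(S',\st_i)>0$. The ratio is therefore $0$.

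\textbf{Main obstacle.} The hard part is Step 1's treatment of the setup transaction, which is charged at its own execution state. I expect to resolve it by choosing the trigger to be zero-gas in the model, or by amortization as above. A secondary point is to check that the prices $p(\tx)$ of the decoy transactions can be set high, so the honest proposer prefers them, without affecting the cost, since cost multiplies price by the realized gas of $0$.
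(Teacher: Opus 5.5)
Your proposal is correct and follows the paper's proof: every decoy transaction in $\block_i$ has zero realized gas at execution, so under $\cost=p\cdot\gas$ it costs nothing even with enforceability. Your check that $\cost(\block^*_i,\st_i)\ge\cost(S',\st_i)>0$, which gives $\crRew^*=0$, fills in a step the paper leaves implicit. The setup-transaction cost you flag as the main obstacle does not arise in the paper's construction: there decoupled block creation $B_i(M_i)$ takes no state as input and the decoys are simply specified to have zero gas at execution, so the cost is exactly $0$ and your amortization fallback is unnecessary.
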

\begin{proof}
    See Appendix~\ref{app:lowCRRewActualCost}
\end{proof}
    
In the lemma above, $\crRew^* = 0$ because the cost function is $\cost(\tx) = p(\tx)\cdot \gas(\tx,\st)$, and the adversary can ensure that $\gas(\tx,\st)=0$. This raises a natural question: can a different cost function discourage the attack? Our next result shows that, under current decoupled blockchain designs, \emph{modifying the cost function does not help}, as the adversary can still evade paying the cost.

\begin{restatable}[]{lemma}{noCostEnforceability}
\label{lem:noCostEnforceability}
    There is no cost enforceability in the current decoupled blockchain models if execution is at least two rounds behind the consensus. 
\end{restatable}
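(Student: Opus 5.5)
We argue by constructing an explicit adversarial client strategy that, for \emph{any} cost function $\cost(\tx,\st)$, leaves the client unable to pay for a transaction that has already been ordered. The key ingredient is the lag: if execution is at least two rounds behind consensus, then when block $\block_i$ is created, the latest executed state available is at best $\st_{i-2}$ (or earlier). In particular, whether $\block_{i-1}$ drains the client's account is unknown when $\block_i$ is formed. The plan has three steps.

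First, we construct the attack. An adversarial client with account balance $b>0$ submits two transactions in the same round. The first is a transfer $\tx_1$ that moves all $b$ to a fresh account, with a small estimate $\est(\tx_1)$. The second is an arbitrary transaction $\tx_2$ (for instance, the gaslighting transaction of \cref{lem:decoupledCRRes}) whose cost $\cost(\tx_2,\st)$ is positive under the chosen cost function. Because the mempool is adversarially controlled and block creation in the decoupled model depends only on $M_i$, the adversary can arrange that $\tx_1\in\block_{i-1}$ and $\tx_2\in\block_i$. It does this either by submitting $\tx_2$ one round later, or by relying on the non-triviality condition $|\block_i|\geq\min(\lfloor\maxgas\rfloor,|M_i|)$ with a suitably shaped mempool.

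Second, we argue that the proposer of $\block_i$, and consensus on it, cannot detect the problem. Both $B_i$ and $C_i$ have inputs $M_i$ and $\block_i$ only, and never $\st_i$. The execution lag means that even a protocol granting partial state access only sees $\st_{i-2}$, in which the balance is still $b$. The proposer therefore sees a state indistinguishable from one in which $\tx_1$ had not been included. Any validity check on $\tx_2$ based on available information passes, so $\tx_2$ is ordered.

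Third, at execution time $E_i(\st_i,\block_i)$ we have $\st_i=E_{i-1}(\st_{i-1},\block_{i-1})$, in which the balance is $0$. Deducting $\cost(\tx_2,\st_i)>0$ is therefore impossible, and cost enforceability is violated. Nothing in this argument depends on the form of $\cost$; this is the point that distinguishes the lemma from \cref{lem:lowCRRewActualCost}.

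The main obstacle is making the indistinguishability in the second step rigorous against an arbitrary protocol. A protocol might, for example, require escrow or a reserved deposit at submission time. In the current decoupled models such reservations are not part of block creation, so the proof would be stated relative to the model in \cref{fig:decoupled}. We would formalize it via two runs, one where $\tx_1$ is present and one where it is absent, that agree on all inputs to $B_i$ and $C_i$. We would also note that the assumption on $\delta_{Hamming}$ is irrelevant here, since $\tx_1$ changes only the balance bits.
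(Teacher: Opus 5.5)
Your proposal is correct and takes essentially the same route as the paper: the adversary drains its account with a transfer placed in an earlier block whose execution is not yet visible when the attacking block is built, so the later transaction can never be charged. The differences are cosmetic. The paper puts the draining transfer in $\block_{i-2}$ and secures its inclusion with a sufficiently high price $p(\tx)$, whereas you use $\block_{i-1}$ and rely on adversarial control of the mempool; you also make explicit the independence from the cost function and an indistinguishability argument that the paper leaves implicit.
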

\begin{proof}
    See Appendix~\ref{app:noCostEnforceability}
\end{proof}

Finally, we show that as a consequence of~\cref{lem:lowCRRewActualCost,lem:noCostEnforceability}, reward fairness is violated in the decoupled model.

\begin{restatable}[]{theorem}{noRewardFairness}
\label{lem:noRewardFairness}
    In the decoupled model, there is no reward fairness. 
\end{restatable}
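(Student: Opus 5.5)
The plan is to build an adversary that targets one fixed honest validator, say $V_1$, and drives its cumulative reward $\rho_1$ to essentially zero. Meanwhile every other validator $V_j$ earns a positive constant per block on average, so that $\rho_j/\rho_1 \to \infty$ and the condition $\rho_i/\rho_j \approx 1$ of \cref{def:rewardFairness} fails. Validators are selected uniformly at random as schedulers, so over a long horizon each one proposes about a $1/N$ fraction of the blocks. It therefore suffices to control the reward of each block in terms of whether or not it was proposed by $V_1$.

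For the blocks of $V_1$, I will reuse the gaslighting construction of \cref{lem:decoupledCRRes}. In a round where $V_1$ is the scheduler and execution lags consensus by at least two rounds, the adversary floods the mempool. The flood consists of transactions with $\est(\tx)=1$ (which keeps the mempool congested in the sense of \cref{def:congested}) whose if-condition makes them abort almost immediately, so that $\gas(\tx,\st_i)=0$ in the state in which they will actually execute. By \cref{lem:decoupledCRRes}, any block creation algorithm $B$ run by $V_1$ then yields a block with $\crRes=0$. By \cref{lem:lowCRRewActualCost}, under the current cost model $\cost(\tx,\st_i)=p(\tx)\gas(\tx,\st_i)=0$, so $\reward{\block_i}=0$ and the attack is free for the adversary. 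To cover modified cost functions that would charge for $\est(\tx)$, I will invoke \cref{lem:noCostEnforceability}: the adversary drains the account during the two-round execution lag, so the charge is never paid and the reward to $V_1$ is still zero. The key subtlety is that the adversary does not know in advance who the scheduler is. I will handle this by making the transactions state-dependent: the if-condition checks the proposer identity of the block it is executing in (or a flag the adversary set earlier). The transactions then burn real gas and pay $p(\tx)$ when included by $V_j\neq V_1$, and cost nothing when included by $V_1$.

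For the remaining blocks, the adversary keeps the mempool congested with honest-looking transactions whose true gas matches their estimates. Since execution is deterministic and a single transaction changes only a bounded part of the state, these yield reward bounded below by a constant $c>0$ for every $V_j\ne V_1$. Summing over $T$ rounds, $\E[\rho_j]\ge cT/N$ while $\rho_1 = 0$, which contradicts fairness. I will also note the indistinguishability remark: from the ledger alone, $V_1$'s empty-reward blocks look the same as those of a malicious proposer who built bad blocks from an honest mempool. So the protocol cannot simply reassign rewards to fix the imbalance.

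I expect the main obstacle to be making the per-proposer conditional behaviour rigorous within the model. Specifically, I need to argue that the same transaction can have zero gas in $V_1$'s blocks yet positive, payable gas in other blocks, and that the execution lag ensures the state at execution time is the adversarially intended one. If conditioning on proposer identity is considered too strong, my first fallback is to have the adversary inject the cheap flood only after observing the scheduler selection for round $i$, before $B_i$ runs. This is allowed since it can submit at any round. The honest-looking transactions would then be used otherwise.
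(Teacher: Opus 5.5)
There is a genuine gap, and it sits exactly at the step you flag as the main obstacle: making the attack hit only $V_1$'s blocks.

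\textbf{Why symmetry blocks your construction inside the model.} In the decoupled model every honest proposer outputs $\block_i=B(M_i)$ with the same deterministic $B$. The reward $\sum_{\tx\in\block_i}p(\tx)\gas(\tx,\st_i)$ then depends only on $M_i$ and on the state left by earlier blocks, not on who was drawn in round $i$. So if $M_i$ is fixed before the scheduler of round $i$ is revealed, every honest validator has the same conditional expected reward in every round, and $\E[\rho_j]$ is exactly a $1/N$ share of the expected total for every honest $V_j$. Hence $\rho_1=0$ together with $\E[\rho_j]\ge cT/N$ is impossible, and no mempool strategy alone can produce your asymmetry.

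\textbf{Neither device is available.} Your argument rests entirely on your two symmetry-breaking devices, and the model supplies neither.
\begin{itemize}
\item $\gas:\tran\times\sstate\to[0,1]$ sees only the transaction and the ledger state, not the identity of the proposer of the block being executed.
\item ``The adversary can submit at any round'' says nothing about learning the scheduler before $M_i$ closes. Even granting leader prediction, you would need it well before round $i$. The state that makes the flood cheap (\cref{lem:decoupledCRRes}) and the draining transaction (\cref{lem:noCostEnforceability}) must sit in blocks executed before $\block_i$, i.e., be committed rounds earlier (two rounds earlier in the paper's construction).
\end{itemize}

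\textbf{The main device also clashes with draining.} The drain executes before $\block_i$. Every proposer includes the same flood, since they all compute the same $B(M_i)$. So the drain leaves that flood unpaid in the blocks of $V_j\neq V_1$ as well, and for any cost function charging $\est$ the asymmetry disappears.

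\textbf{How the paper breaks the symmetry.} The paper makes the adversary a (rational) validator.
\begin{itemize}
\item It runs the attack of \cref{lem:decoupledCRRes} in every round, and it pays nothing by \cref{lem:lowCRRewActualCost,lem:noCostEnforceability}.
\item In the rounds where it is itself the proposer, it deviates from $B$ and omits its own bogus transactions, which it can recognise because it created them.
\end{itemize}
Every honest validator then earns $0$, while the adversarial one earns a positive reward. No knowledge of the schedule and no proposer-dependent execution is needed, because the symmetry is broken by the adversary's private knowledge of which transactions are bogus.

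If you want to keep your external-adversary version with an honest victim, add an explicit assumption that the leader schedule is public at least two rounds ahead. You should then also drain only the accounts used for the flood aimed at $V_1$'s rounds.
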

\begin{proof}
    See Appendix~\ref{app:noRewardFairness}
\end{proof}

Without cost enforceability, regardless of the function used for $\cost$, protocols in the decoupled model cannot prevent an adversary from launching attacks from accounts with no funds, since the account states are unknown to block proposers. 
Possible solutions include ensuring that account states are fully known at block creation time (i.e., going back to the the coupled setting), or restricting which transactions may be included in a block to mitigate such attacks, which forms the main idea behind our solution to this problem and which we will discuss in the next section.

\section{Partial Coupling}
\label{sec:partialCoupling}

\begin{figure}[ht!]
    \centering \includegraphics[width=\linewidth]{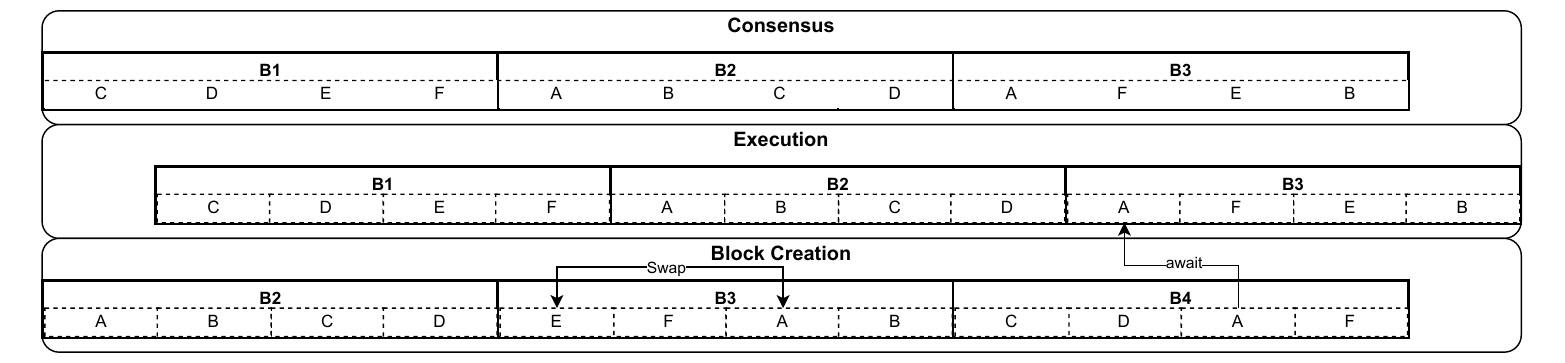}
    \caption{Partial coupling for a set of Blocks $B1, B2, B3, B4$ and sets of tasks accessing resources $A,B,C,D,E,F$. During the creation of Block~B2 (lower left), analysis of a task accessing resource~$C$ can begin once the execution of the previous task accessing~$C$ has completed. Similarly, at the creation of Block~B3, the same applies to tasks accessing resource~$A$. If resource~$A$ is highly contended, such tasks can be placed earlier in the block (e.g., at the beginning of Block~B3), ensuring that a task accessing~$A$ can be analyzed within the block-creation window of Block~B4. Note that execution is slightly shifted relative to consensus, since tentative execution can begin only after the initial block broadcast. }
    \label{fig:partialcoupled}
\end{figure}
We have strongly argued that decoupled systems are at severe risk of attacks and manipulation, which can even lead to long-term stake drift in favor of the adversary.
Here we provide a solution that achieves the best of both worlds: the efficiency of a decoupled system and the incentive mechanism of a coupled system.  And this solution is eminently practical and could be implemented in most decoupled systems today.
Our solution rests on a realistic and reasonable assumption: there are enough independent transactions, with respect to pending executions, in the mempool at any time; this is equivalent to the standard assumption of DAG-Based protocols~\cite{DBLP:conf/podc/KeidarKNS21,narwhal,DBLP:conf/wdag/Amores-SesarGHO25,DBLP:conf/ndss/BabelCDKKKST25} to ensure that blocks are not partially full.

Let $t$ be the last round of execution that finished, when round~$i$ of block creation starts. 
Let $\TaskSet_P(t,i) := \cup_{j \in [t+1,i-1]} \block_j$, where $t < i$, denote the set of pending transactions in blocks whose execution outcomes are not yet finalized. 
The $\block_i$ denotes the set of transactions scheduled for execution after $\TaskSet_P(t,i)$. 
Let $\st_t$ be the execution state known at the start of round~$i$ of block creation. 
The execution of $\TaskSet_P(t,i)$ induces an unknown state transition $\Delta(\TaskSet_P(t,i))$, yielding the system state at execution round $i$ as $\st_{i} = \st_t \oplus \Delta(\TaskSet_P(t,i))$.


Since $\Delta(\TaskSet_P(t,i))$ is unknown at the time of block creation, $\block_i$ is scheduled assuming knowledge of only state~$\st_t$, while execution occurs under the potentially different state~$\st_{i-1}$.

This observation leads to a key insight. If $\block_i$ is restricted so that for each $\tx \in \block_i$, the bits of state that can effect the transition of $\tx$ is disjoint from the state that may be written by $\TaskSet_P(t,i)$, i.e., $\mathsf{Read}(\block_i) \cap \mathsf{Write}(\TaskSet_P(t,i)) = \emptyset$ then the execution of $\TaskSet_P(t,i)$ cannot affect the transition state induced by $\block_i$. Consequently, the validator for round~$i$ of block creation has complete information, and the runtime behavior and outputs of $\block_i$ can be determined deterministically. Once these tasks become pending, their known outputs can be used in subsequent block creation decisions.

As a result, transaction analysis at block creation time becomes fully deterministic, and malicious behavior, such as intentional resource or runtime misestimation, can be detected and punished. The block itself serves as verifiable evidence of misbehavior, enabling penalties.

We formalize this principle as \emph{partial resource coupling}. Rather than coupling execution and consensus at the block level, partial resource coupling requires the scheduler to exclude from $\block_i$ any task whose read set may intersect the write set of $\Delta(\TaskSet_P(t,i))$. Under this constraint, scheduling decisions rely on complete and correct state information despite decoupled execution. Figure~\ref{fig:partialcoupled} outlines how this could look in practice.

Finally, after analyzing $\block_i$, the validator may reorder tasks to prioritize access to frequently read state objects. This increases their availability to subsequent validators and reduces the inclusion latency of tasks accessing popular resources. Such reordering preserves correctness as long as the affected transactions access independent bits of the execution state, as the two transactions that access different resources produce the same output regardless of the order of execution.

\begin{restatable}[]{theorem}{partialcoupled}
\label{thm:partialcoupledsecurity}
The partially coupled setting guarantees security against gaslighting.

\end{restatable}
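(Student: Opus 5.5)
The plan is to make ``security against gaslighting'' precise by negating the two consequences established for the decoupled model. First, we show that in the partially coupled setting a block-creation algorithm achieves $\crRes = 1$ whenever the mempool is congested, which is the analogue of \cref{lem:coupledCRRes} and the opposite of \cref{lem:decoupledCRRes}. Second, we show that cost enforceability holds, the opposite of \cref{lem:noCostEnforceability}, so the attacker's cost is proportional to the capacity it occupies and reward fairness is restored. Throughout, congestion is read as congestion among the admissible transactions: those $\tx$ with $\mathsf{Read}(\tx) \cap \mathsf{Write}(\TaskSet_P(t,i)) = \emptyset$. This is the independence assumption stated before the theorem.

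\textbf{Key step 1: determinism of admissible transactions.} Fix round $i$ with last executed round $t$. Any admissible $\tx$ reads only bits that no transaction of $\TaskSet_P(t,i)$ writes. By determinism of execution, the state $\st_{i} = \st_t \oplus \Delta(\TaskSet_P(t,i))$ agrees with $\st_t$ on $\mathsf{Read}(\tx)$. Hence $\gas(\tx,\st_i) = \gas(\tx,\st_t)$, and the success or rollback outcome of $\tx$ coincides with its outcome on $\st_t$.

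We must also handle interactions among transactions inside $\block_i$. We will require, as the scheduler already does when reordering, that transactions placed in $\block_i$ are analysed sequentially on $\st_t$. Equivalently, later transactions in the block are restricted to read sets disjoint from the write sets of earlier ones. The proposer can therefore compute every $\gas(\tx,\st_i)$ exactly at block-creation time.

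\textbf{Key step 2: capacity utilization.} Since all true gas values are known, the proposer faces the same information as in the coupled setting. We then reuse the algorithm from the proof of \cref{lem:coupledCRRes}: select an admissible set $S$ with $|S| \le N$ and $\gas(S,\st_i) = \maxgas$, which exists by congestion. This gives $\crRes = 1$.

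The adversarial strategy from \cref{lem:decoupledCRRes} relied on $\est(\tx)$ being large while $\gas(\tx,\st_i) = 0$. Under partial coupling this strategy no longer misleads the proposer, because packing uses $\gas$ rather than $\est$.

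\textbf{Key step 3: cost enforceability and fairness.} The attack in \cref{lem:noCostEnforceability} drains the paying account through a transaction pending in consensus. Under partial coupling, a transaction that reads an account balance is inadmissible while any pending transaction writes that balance. So the balance the proposer sees on $\st_t$ is the balance at execution, and the fee $\cost(\tx,\st_i)$ can be verified as payable and deducted.

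Misestimation thus becomes detectable from the block itself and can be punished. The gaslighting strategy underlying \cref{lem:noRewardFairness} then yields no reduction of any proposer's $\crRew^*$. Since proposers are chosen uniformly at random and each can realise the maximal reward, the ratios $\rho_i/\rho_j$ concentrate near $1$.

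\textbf{Main obstacle.} The hardest part is step 1: formalizing read and write sets when transactions contain if-conditions, since the set of bits a transaction reads may itself depend on state. I would first define $\mathsf{Read}(\tx)$ and $\mathsf{Write}(\tx)$ conservatively, as the union over all states of the accessed bits (a declared access list, as in Sui or Aptos). Soundness then follows by induction on the steps of execution. The remaining subtlety is to define ``security against gaslighting'' so that it is not vacuous when too few admissible transactions exist. I would state the guarantee under the explicit independence assumption.
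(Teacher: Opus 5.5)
Your proof is correct in substance, but it takes a different route from the paper's. In the paper, the honest proposer of $\block_i$ knows every input its transactions will read, so it executes them to get their \emph{exact} write addresses and records these in $\block_i$. Later proposers read the pending write sets off the pending blocks. Most of the paper's proof deals with a Byzantine proposer who misdeclares these sets. The mismatch is caught at execution, and the offending transaction is aborted, so it writes nothing and cannot invalidate the assumptions under which honest blocks were built. That proposer is then excluded. The paper ends informally: a scheduler that knows each transaction's true gas can, in a congested mempool, avoid transactions whose estimate is inconsistent with it.

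You instead make access sets an intrinsic, state-independent property of the transaction. A Byzantine proposer then has nothing to falsify, and that case analysis disappears. You also derive concrete consequences: $\crRes=1$, which the paper proves only separately in \cref{thm:partialcoupledthroughput}, and cost enforceability, which the paper's proof does not address. This gives ``security against gaslighting'' a sharper meaning.

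The price is coarser sets, hence fewer admissible transactions, and one point you must tighten. The union over all states is in general infeasible to compute, and with data-dependent addressing it can be essentially the whole state. So in practice you need the declared-list reading. But then soundness of $\mathsf{Write}(\TaskSet_P(t,i))$ does not follow from induction on execution steps. It needs a runtime rule that a transaction touching an undeclared bit aborts with no state change, which is exactly the enforcement the paper invokes for misdeclaring proposers. Your induction supplies the other half: states agreeing on $\mathsf{Read}(\tx)$ give the same trace and gas. It also lets you replace a candidate's static read set by its exact dynamic read set, obtained by simulating it on $\st_t$. That is essentially how the paper sidesteps your ``main obstacle'': it uses sets obtained from execution on known inputs rather than static over-approximations.

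Two smaller corrections. First, sequential analysis inside $\block_i$ and disjointness of later read sets from earlier write sets are not equivalent. Only the latter guarantees that the stand-alone sum $\gas(S,\st_i)=\sum_{\tx\in S}\gas(\tx,\st_i)$ in your Step 2 equals the gas actually consumed. Adopt that version and build it into your congestion hypothesis.

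Second, your claim that $\rho_i/\rho_j$ concentrates near $1$ is not established. You have only shown that the specific strategy behind \cref{lem:noRewardFairness} is neutralized. With conservative declared sets, an adversary can cheaply place a transaction with a huge declared write set in $\block_{i-1}$, which shrinks the admissible pool of the round-$i$ proposer. This is the broad-access attack the paper leaves open in its discussion. So fairness holds at most under your independence assumption. State it as conditional, or drop it, since the theorem does not need it.
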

\begin{proof}
See Appendix~\ref{app:partialcoupledsecurity}
\end{proof}

Finally, as we prove in the following, in the worst case, the partially coupled model only has a minor latency overhead compared to the coupled model, and, as long as there are some transactions independent of pending blocks to include, partial coupling always increases the system throughput and improves latency compared to the coupled model up to the same throughput as the decoupled model as we can overlap consensus, execution and block creation.

\begin{restatable}[]{theorem}{partialcoupledlatency}
\label{thm:partialcoupledlatency}
The difference in latency between the partially coupled and coupled settings is at most $\frac{\delta^c}{2}$. Furthermore, if in every round there exists a set $Tr$ in the mempool, with $N$ transactions, which are independent of pending blocks and $\gas(Tr,\st_t) \geq \maxgas$, then the partially coupled setting has lower latency than the coupled setting.
\end{restatable}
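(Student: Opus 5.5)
I will prove the statement by comparing, transaction by transaction, when a transaction is ab-broadcast and when the first block containing it is executed, under the timing model of \Cref{sec:prelims}. A coupled slot has length $\delta=\delta^{b}+\delta^{c}+\delta^{e}$. In the partially coupled setting, a new block-creation phase may start as soon as the previous block has been broadcast. Tentative execution of a block starts after its initial broadcast, which is roughly halfway through consensus; this is the shift drawn in Figure~\ref{fig:partialcoupled}.

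\textbf{Worst-case bound.} The worst case is a transaction whose read set intersects the write set of $\TaskSet_P(t,i)$. Partial resource coupling then excludes it from $\block_i$, so it must wait until the conflicting pending block has been executed. I will argue that the partially coupled schedule never blocks for longer than the coupled one, apart from the pipelining offset. In the coupled model, the transaction also cannot enter a block until the previous block has been finalized and executed, i.e., it waits a full slot $\delta$. In the partially coupled model, the conflicting block's execution began after its broadcast rather than after consensus finished. So the transaction becomes eligible after a comparable delay, but the next block-creation window is aligned to the consensus pipeline and not to the end of execution. Since execution is shifted by at most half a consensus phase relative to consensus, a conflicting transaction can miss the current window by at most $\delta^{c}/2$ beyond its coupled-setting waiting time. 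Bounding this misalignment term by $\delta^{c}/2$, using that tentative execution begins once the block has been disseminated (at the midpoint of $C_i$), is the main obstacle. I would make it precise by writing explicit start and end times of $B_i$, $C_i$ and $E_i$ in both pipelines and taking the worst-case arrival time of $\tx$.

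\textbf{Strict improvement.} Now assume that in every round the mempool contains a set $Tr$ of $N$ transactions that are independent of pending blocks and satisfy $\gas(Tr,\st_t)\ge\maxgas$. Then every block-creation phase can produce a full block without waiting for execution. Because reads are disjoint from pending writes, block creation has exact state information and can fill the block to $\crRes=1$, as in \Cref{lem:coupledCRRes}. Consequently blocks are produced every $\delta^{b}+\delta^{c}$, or even more often with overlapping block creation, rather than every $\delta$. Since every block carries the full $\maxgas$, throughput is at least the coupled throughput and the queueing delay in the mempool shrinks. Averaging over the transactions of a run, the per-round savings of $\delta^{e}$ dominate the at most $\delta^{c}/2$ penalty, which applies only to conflicting transactions. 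For the independent transactions the penalty vanishes, so the average latency, and hence $\op{latency}(\Pi)$ taken as the supremum over adversaries, is strictly lower than in the coupled setting.
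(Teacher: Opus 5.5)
Your decomposition matches the paper: only the pre-inclusion waiting time can get worse, and under the $Tr$ hypothesis blocks are full. However, the step you flag as the main obstacle is not merely unfinished; the mechanism you propose does not give it. That tentative execution starts at the midpoint of consensus only fixes when the conflicting block's execution ends, about $\delta^c/2+\delta^e$ after its consensus starts. Where that instant falls relative to the next block-creation window depends on $\delta^e$ and the pipeline period. Nothing keeps the residual wait below $\delta^c/2$ for every transaction.

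The paper does not attempt a per-transaction worst-case bound. It argues that while the transaction waits for the conflicting execution, the pipeline keeps running consensus instances. Once the conflict clears, the transaction waits for the instance in progress, which is $\delta^c/2$ on average. Since $\op{latency}(\Pi,\CA,\CR)$ is an average over the transactions of a run, this expected-residual argument, applied in the worst case where every transaction conflicts, is what yields the bound.

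In the second part, your claim that the per-round saving of $\delta^e$ dominates the $\delta^c/2$ penalty is unsupported and in general false. With the paper's own parameters ($\delta^c=600$\,ms, $\delta^e=200$\,ms), the penalty exceeds the saving. The $Tr$ hypothesis only guarantees $N$ independent transactions per round, so it does not bound how many mempool transactions conflict and pay the penalty.

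The missing idea is the paper's exchange argument. Because every block is full, each time a conflicting $\tx$ is left out of a block, another transaction $\tx'$ takes that slot. The waiting time $\delta^{w'}(\tx')$ then decreases by exactly the amount by which that of $\tx$ increases. Deferrals therefore cancel in the average over transactions. The strict improvement comes only from the faster inclusion-to-execution path, $\op{latency}(\tx)<\delta^{w'}(\tx)+\delta^b+\delta^c+\delta^e$, so no comparison between $\delta^e$ and $\delta^c/2$ is needed.
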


\begin{proof}
See Appendix~\ref{app:partialcoupledlatency}
\end{proof}

\begin{restatable}[]{theorem}{partialcoupledthroughput}
\label{thm:partialcoupledthroughput}
The partially coupled setting has the same throughput as the decoupled setting if in every round there exists a set $Tr$ in the mempool, with $N$ transactions, which are independent of pending blocks and $\gas(Tr,\st_t) \geq \maxgas$. Further, the partially coupled setting has $\crRes = 1$.
\end{restatable}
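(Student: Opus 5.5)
The plan is to compare the partially coupled pipeline round by round with the decoupled one, and to reduce the statement $\crRes = 1$ to the coupled-setting argument of \Cref{lem:coupledCRRes}.

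\textbf{Step 1: the scheduler sees the true gas.} Let $t$ be the last finished execution round when block creation for round $i$ starts. By hypothesis there is a set $Tr$ in the mempool with $|Tr| = N$, every transaction independent of $\TaskSet_P(t,i)$, and $\gas(Tr,\st_t)\geq \maxgas$. Independence means $\mathsf{Read}(Tr)\cap \mathsf{Write}(\TaskSet_P(t,i))=\emptyset$. Hence for every $\tx\in Tr$ we have $\gas(\tx,\st_{i})=\gas(\tx,\st_t)$, and likewise for its output. So the proposer computes exact gas values using only $\st_t$, exactly as in the key insight used for \Cref{thm:partialcoupledsecurity}.

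\textbf{Step 2: show $\crRes=1$.} I restrict the candidate set to $Tr$ and run the block-creation algorithm of \Cref{lem:coupledCRRes} on it, with true gas $\gas(\cdot,\st_t)$.
\begin{itemize}
\item Because $|Tr| \le N$ and each gas value lies in $[0,1]$, this set meets the same congestion-type condition used there.
\item I would rather construct the block explicitly: take a subset of $Tr$ whose true gas is exactly $\maxgas$.
\item If only $\gas(Tr)\ge\maxgas$ is guaranteed, I would need either a subset summing to exactly $\maxgas$ or a truncation argument. For the truncation, include transactions greedily, then set the last included transaction's gas budget so that the total hits $\maxgas$.
\end{itemize}
This exact-sum issue is the main technical obstacle. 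I expect to resolve it either by invoking the congestion definition's existence of $S$ with $\gas(S)=\maxgas$ applied to $Tr$, or by reading the hypothesis as implying such a subset. Since the gas values are exact and unaffected by pending blocks, $\sum_{\tx\in\block_i}\gas(\tx,\st_i)=\maxgas$, which gives $\crRes=1$.

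\textbf{Step 3: throughput matches the decoupled setting.} In the partially coupled pipeline, block creation for round $i$ needs only $\st_t$, not $\st_{i-1}$. So block creation, consensus and execution overlap exactly as in \Cref{fig:decoupled}, and a new block is produced every consensus slot rather than every $\delta=\delta^e+\delta^c+\delta^b$. Execution occupies the full slot, giving the same speed-up factor $\beta$.

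The throughput of the decoupled setting is at most $\maxgas$ executed gas per block-slot, since $\crRes\le 1$ always. By Step 2 the partially coupled setting attains exactly $\maxgas$ per block. The adversary cannot reduce this: by \Cref{thm:partialcoupledsecurity}, gaslighting is ineffective because inflated estimates are detected and the true gas is known.

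Taking the infimum over adversaries, the partially coupled throughput is at least the decoupled throughput, which is in turn at most $\maxgas$ per slot. This gives equality. The argument also shows that decoupled throughput under attack (\Cref{lem:decoupledCRRes}) is no larger.
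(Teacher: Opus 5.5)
Your Steps~1 and~2 and the pipelining paragraph of Step~3 are consistent with the paper's argument. In both settings block creation, consensus and execution overlap identically, so throughput is governed only by the gas each block carries. The closing inference of Step~3, however, does not hold. From the facts that the partially coupled throughput equals $\maxgas$ per slot and the decoupled throughput is at most $\maxgas$ per slot, you only get that the partially coupled throughput is \emph{at least} the decoupled one. Both facts point the same way. Equality also needs the reverse direction, namely that decoupled blocks carry $\maxgas$ of actual gas.

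In the framing you chose (infimum over adversaries), that reverse direction is false. \Cref{lem:decoupledCRRes} gives an adversary that drives $\crRes$ of a decoupled block to $0$. The hypothesis on $Tr$ does not prevent this, because the decoupled proposer sees only $\est(\cdot)$ and cannot recognize $Tr$. Your final sentence in effect concedes this.

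The paper reaches equality by a different step: it asserts that, under the hypothesis, blocks in \emph{both} settings are filled with the same amount of gas. That is, it compares against the decoupled pipeline running at full capacity, without gaslighting, so that \emph{same throughput} means partial coupling loses nothing relative to that pipeline. You should either make this reading explicit and argue (or assume, as the paper does) the full fill on both sides, or state only what your argument proves under the infimum: partially coupled throughput $\ge$ decoupled throughput, with strict inequality possible.

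In Step~2, drop the truncation option, because the model does not allow it. First, $\est(\tx)$ is part of the client's tuple and the proposer cannot reset it. Second, a transaction whose budget is below its true gas is rolled back. Third, $\crRes$ counts $\gas(\tx,\st_i)$, not any budget. Greedy selection from $Tr$ only guarantees actual gas exceeding $\maxgas-1$. So $\crRes=1$ requires reading the hypothesis as providing a subset whose gas is exactly $\maxgas$, as in the congestion condition used for \Cref{lem:coupledCRRes}. The paper makes the same implicit assumption when it says that $\crRes=1$ follows.
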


\begin{proof}
See Appendix~\ref{app:partialcoupledthroughput}.
\end{proof}

Both results above assume that the mempool always contains a sufficient number of transactions. To obtain an estimate, suppose that each transaction has unit gas cost, so that at most $\maxgas$ transactions fit in a block. Assume further that each transaction accesses a constant number of states, and that execution lags at most a constant factor $C$ behind consensus. Under these assumptions, our condition becomes the following: there exist $\maxgas$ transactions that do not conflict with the $C\maxgas$ transactions pending execution, that is, a constant number of non-conflicting transactions.

By contrast, DAG-based protocols~\cite{DBLP:conf/podc/KeidarKNS21,narwhal,DBLP:conf/wdag/Amores-SesarGHO25,DBLP:conf/ndss/BabelCDKKKST25} assume that, at any point in time, the mempool contains $n^2$ non-conflicting transactions, where $n$ is the number of validators. We therefore stress that our assumption is significantly less restrictive.

To conclude, note that the partially coupled setting trivially achieves higher throughput than the coupled setting. This follows from the fact that partial coupling attains the same throughput as the decoupled model, which in turn outperforms the coupled setting in terms of throughput.
\section{Discussion, Limitations and Future Work}
\label{sec:discussion}
We now discuss the practicalities and limitations of gaslighting and partial coupling (more details in~\Cref{discussion:gaslighting,discussion:partial} respectively). We summarize our discussion as follows:

\begin{itemize}
    \item Gaslighting: We first observe that the damage done by gaslighting is augmented in the decoupled setting with parallel execution of transactions. This would, in the worst case, not only be a gaslighting attack, but also negate any throughput benefit of executing transactions in parallel. Furthermore, we discuss potential probabilistic solutions as well as solutions involving gas escrow to mitigate the impact of gaslighting.
    \item Partial coupling: We discuss limitations of partial coupling (e.g., in leaderless protocols), as well as observe that the throughput of partial coupling increases if the block proposer remains the same across consecutive blocks. This, however, trades censorship resilience for throughput.
    Furthermore, we discuss an extension of partial coupling that accounts for resources that are frequently used in order for future block proposers to more accurately estimate gas consumption and resource conflicts.
    Finally, we discuss mechanism design details against attacks in the partial coupled setting that access a wide range of resources. In particular, we note that the deterministic guarantees of partial coupling with respect to resource access and execution order makes it easier to design resource pricing mechanisms.
\end{itemize}

\subsection{Gaslighting}\label{discussion:gaslighting}
\smallskip{\bf Gaslighting and Parallel Execution.}
Many modern blockchains execute transactions in parallel rather than sequentially, which can reduce the effective execution time (or CPU time) by a factor of $c$, assuming at least $c$ cores are available on all validator machines. 
However, this assumes a perfectly parallelizable workload. In practice, as studies have shown~\cite{anthemius,chiron}, blockchain workloads exhibit high contention, with numerous transaction interdependencies that hinder parallel execution. 
To address this, blockchains such as Sui and Solana~\cite{sui,solana} require users to explicitly specify transaction dependencies to enable efficient scheduling.

As a result, at block creation time the proposer must account not only for the sequential gas limit but also for transaction dependencies to achieve high block capacity utilization. 
Otherwise, a long sequential chain of transactions can negate the performance benefits of parallel execution~\cite{anthemius}. 

This, in turn, expands the attack surface for gaslighting: a client can not only overestimate its resource usage in gas but also deliberately inflate dependencies, impairing parallelism and thereby reducing block utilization.

\smallskip{\bf Probabilistic Attacks.}
The success of gaslighting depends on network and system conditions: if a client’s transaction is included in an unexpected block, the attack may fail. Although the partially synchronous model~\cite{dls} often assumes some adversarial network control, real deployments are more nuanced. While deterministically preventing these attacks in the decoupled model requires a trade-off between throughput and resilience to gaslighting, introducing randomness could yield probabilistic defenses that render the attacks impractical. We leave the design of such randomness-based solutions to future work.

\smallskip{\bf Client Staking.}
One limitation of the gaslighting attack model is that clients may transfer their funds before being fully charged for the submitted transaction. This enables malicious clients to reserve execution capacity and fill blocks without ultimately paying for the resources they occupy.

A natural mitigation is to require clients to escrow gas upfront, reserving a multiple of the declared maximum gas for each submitted transaction. If multiple block proposals may proceed in parallel, the required reservation must scale with the maximum number of concurrent proposals. Such a mechanism prevents clients from evading payment by depleting their balance prior to final settlement.

However, this approach does not eliminate the broader class of gaslighting attacks. As discussed, the system cannot reliably distinguish between malicious overestimation of resource usage and benign misestimation due to limited state knowledge. Consequently, preventing strategic misestimation would require charging clients for the full reserved resource allocation, independent of actual consumption.

While reservation-based pricing is common in cloud environments, blockchains represent shared public resources with an inherent objective of high utilization rather than revenue maximization for validators. Charging for full reservation therefore negatively affects client welfare, as resources remain underutilized while clients willing to pay for resource allocation experience increased waiting times. This trade-off is closely related to what is commonly referred to as the price of anarchy~\cite{selfish}.

\subsection{Partial coupling}\label{discussion:partial}
\smallskip{\bf Leaderless Protocols.}
While partial coupling mitigates the new attack vectors we identified in this paper, partial coupling can only be applied to leader-based consensus. In leaderless protocols, where nodes may propose blocks in parallel, enforcing synchronization between the proposers would effectively collapse the design into a leader-based protocol. Techniques such as address-space forking, as in MirBFT~\cite{mirbft}, can mitigate some of these issues, but it does not address the broader class of gaslighting attacks. We therefore defer a detailed analysis of leaderless protocols to future work.

\smallskip{\bf Stable Leaders.}
In the partially coupled setting, in practice, the primary source of delay during block creation arises from executing transactions in order to accurately determine gas consumption and resource accesses. If the block proposer remains the same process across multiple consecutive rounds, it can reuse knowledge of the output state of pending blocks it created itself. Consequently, it may include transactions accessing the same resources in subsequent blocks without being subject to gaslighting.

While this approach has demonstrated numerous advantages in terms of throughput~\cite{kauri}, it introduces several important trade-offs.
For once, if a leader is stable over multiple rounds, clients might be censored for a prolonged amount of time. Furthermore, the known leader becomes susceptible to DDoS attacks, that can impact the system performance. As such, this results in a trade-off between censorship resilience and throughput.

\smallskip{\bf Extended Partial Coupled Model.}
The basic partially coupled model introduced in this paper enables consensus and execution to proceed in parallel, thereby increasing execution throughput and reducing consensus latency without exposing the system to the attacks identified for fully decoupled designs.

This benefit comes with the caveat that transactions accessing overlapping state may need to be deferred to subsequent blocks, particularly in comparison to the decoupled model. One possible mitigation is to include, within each block, the output state corresponding to frequently used resources. This allows the next block proposer to use these outputs to accurately estimate gas consumption and resource conflicts for transactions in the mempool.

If a malicious proposer includes an incorrect output state, this deviation can be detected deterministically during execution, enabling the protocol to slash the proposer for Byzantine behavior. The primary overhead of this approach is the additional data that must be included in each block, which increases the bandwidth cost of consensus.

A formal analysis of this approach, as well as other potential extensions of the partially coupled model, is left to future work.

\smallskip{\bf Game Theoretic Improvements.}
In the current design of the partially coupled model, an adversary could attempt to degrade throughput by submitting transactions that access a wide range of system resources, which reduces the availability of independent transactions for subsequent blocks. This makes it harder for the next block proposer to construct a block, as the majority of the transactions in the mempool might conflict with this transaction.

This attack vector is not unique to partial coupling. Any parallel execution model is vulnerable to transactions with broad resource accesses, which can force sequential execution of block and limit the achievable speed-up.

However, partial coupling provides determinism guarantees with respect to resource access and execution order. These guarantees enable the design of pricing mechanisms in which transactions that access resources recently used in pending or prior blocks incur higher fees. Such resource-aware pricing can render the attack economically unattractive while preserving throughput under honest workloads.

\bibliographystyle{ACM-Reference-Format}
\bibliography{refs}

\appendix

\section{Omitted Proofs}

\subsection{Proof of~\Cref{lem:coupledCRRes}}\label{app:coupledCRRes}
\coupledCRRes* 

\begin{proof}
    Since the mempool is congested, there exists $S \subset M_{i-1}$ such that $|S| \leq N$ and $\gas(S) = \maxgas$. The block creation algorithm simply finds this set, assuming access to a knapsack oracle, and puts $\block_i = S$. This ensures that $\gas(\block_i) = \maxgas$. Thus $\crRes = 1$.  
\end{proof}

\subsection{Proof of~\Cref{lem:decoupledCRRes}}\label{app:decoupledCRRes}
\decoupledCRRes* 

\begin{proof}
    Since block creation is deterministic, the adversary $\adv$ adds two disjoint set of transactions, $S_\adv$ and $S^\prime_\adv$, to the mempool $M_{i-1}$ such that:
    \begin{enumerate}
        \item $B(M_{i-1}) = \block_i \subseteq S_\adv$, where $\est(\tx) = \maxgas/N$ and $|S_\adv| \geq N$ but $\gas(\tx, E(\st_{i-1}, \block_i)) = 0$ for all $\tx \in S_\adv$. 
        \item For any $\tx \in S^\prime_\adv$, $\est(\tx) = \gas(\tx, \st_i) = 1$. 
    \end{enumerate}
    Thus adversary succeeds in getting $\crRes = 0$ for block $\block_i$. At the same time, if the state $\st_i$ was known then best block $\block^*_i$ for resource utilization would have $\block^*_i \subset S^\prime_\adv$. Thus $\crRes* = 0$ for $\block_i$. 
\end{proof}


\subsection{Proof of~\Cref{lem:lowCRRewActualCost}}\label{app:lowCRRewActualCost}
\lowCRRewActualCost* 

\begin{proof}
    Any transaction, $\tx$, in the proof of~\cref{lem:decoupledCRRes} has $\gas(\tx, \st_i) = 0$ where $\st_i = E(\st_{i-1}, \block_i)$. Thus in current cost model the $\cost(\tx, \st_i) = p(\tx)\cdot\gas(\tx,\st_i) = 0$.  
\end{proof}

\subsection{Proof of~\Cref{lem:noCostEnforceability}}\label{app:noCostEnforceability}
\noCostEnforceability* 

\begin{proof}
    At any round $i$, the resource of a client is defined by the execution state $\st_i$ after all transactions in blocks upto $\block_i$ are executed. 
    If the execution is two rounds behind the block creation, then an adversary sends a transaction $\tx$, which transfers all its resources, in the round $i-2$ to the mempool $M_{i-2}$ with a price $p(\tx)$ high enough such that $B_{i-2}(M_{i-2})$ puts $\tx \in \block_{i-2}$. 
    Then the adversary would launch the attack from~\cref{lem:decoupledCRRes} in round $i$. 
    Thus, it cannot be made to pay for its transactions.   
\end{proof}

\subsection{Proof of~\Cref{lem:noRewardFairness}}\label{app:noRewardFairness}
\noRewardFairness* 

\begin{proof}
    A rational validator would launch the attack described in the~\cref{lem:decoupledCRRes} in every round, and we see from~\cref{lem:lowCRRewActualCost,lem:noCostEnforceability} that the validator submitting such a transaction would pay $0$ for the attack. For the round where the validator is the proposer, it would not include its own submitted transactions to its block. Thus, for every other validator reward is $0$, while the reward for the rational validator is greater than $0$. This violates reward fairness.  
\end{proof}

\subsection{Proof~of~\Cref{thm:partialcoupledsecurity}}\label{app:partialcoupledsecurity}
\partialcoupled*
\begin{proof}
Consider an honest validator $S$ for round $i$. By definition, $S$ includes in $\block_i$ only those transactions for which it has complete knowledge of all input states that will be observed at execution time $t+\delta$. Knowing the input state, for each such task, $S$ can deterministically derive its set of write addresses (\textit{e.g.} by executing the transaction) and include these addresses in $\block_i$.

Assume first that all previously constructed blocks were produced by honest validators. Let $S^\prime$ be any subsequent validator. Since each prior block $\block_k$ contains the complete set of write addresses corresponding to $\block_k$, $S^\prime$ can deterministically determine whether its local task set $\block_j$ conflicts with any pending $\block_k$. Thus, conflict detection is sound and complete with respect to all previously scheduled but not yet executed tasks.

Now consider the case where a block is constructed by a malicious validator $S_e$ that declares an incorrect state access set, i.e., it omits required state accesses or declares unnecessary ones. Such inconsistencies are detected at execution time, as the actual state accesses performed by the tasks will not match the declared access set. Upon detection, the corresponding transaction is aborted, and $S_e$ is excluded from further scheduling.

Since an aborted transaction does not create state updates, it does not affect the state assumptions under which any pending block $\block_k$ constructed by a correct scheduler was formed. Consequently, all blocks constructed by correct schedulers are guaranteed to have been created with respect to the settled state.

Therefore, any correct scheduler has access to the settled state for each individual transaction it includes in a block.

Since every correct scheduler has access to the settled state for each individual transaction, the scheduler can correctly compute the amount of gas needed to execute this transaction using the gas function. 
This allows the scheduler to check for inconsistencies between the actual gas needed per transaction and the estimated cost.
Assuming a congested mempool, the scheduler can select transactions that minimize these inconsistencies, enabling the scheduler to be secure against gaslighting.

\end{proof}

\subsection{Proof of~\Cref{thm:partialcoupledlatency}}\label{app:partialcoupledlatency}
\partialcoupledlatency*

\begin{proof}
Given a blockchain protocol $\Pi$ and a transaction $\tx$. 

In the coupled setting, scheduling, consensus, and execution proceed sequentially. Thus, the latency of transaction $\tx$ is
$\op{latency}(\tx)=\delta^w+\delta^b+\delta^c+\delta^e$,
where $\delta^{w}(\tx)$ is the expected time from when the transaction joins the mempool until it is added to a block.

In the partially coupled setting, scheduling proceeds concurrently with consensus and execution. Therefore, the latency of $\tx$ satisfies
$\op{latency}(\tx)<\delta^{w'}(\tx)+\delta^b+\delta^c+\delta^e$,
where $\delta^{w'}(\tx)$ may differ from $\delta^{w}(\tx)$.

The difference $\delta^{w'}(\tx)-\delta^w(\tx)$ is positive when $\tx$ accesses the same states as transactions agreed upon by consensus but still not executed.  

If there exists a set $Tr$ at every round with $N$ transactions in the mempool which are independent of pending blocks and $\gas(Tr,\st_t) \geq \maxgas$ , then every block is at full capacity. This implies $\crRes = \crRes^* = 1$. Further, for every additional block in which $\tx$ is not included, there exists a transaction $\tx'$ whose waiting time $\delta^{w'}(\tx')$ is reduced by the same amount.
Thus, when taking expectations over all transactions, as part of the randomness of all runs, we conclude that the latency of $\Pi$ in the partially coupled setting is lower than in the coupled setting.

If there exists no set $Tr$ at some round with $N$ transactions in the mempool which are independent of pending blocks and $\gas(Tr,\st_t) \geq \maxgas$, some block may be only partially filled, since certain transactions must wait for the execution of conflicting transactions to complete. This is a common issue in the coupled setting, with the key difference that, during the waiting period, the partially coupled setting continues running consensus instances. As a result, on average, such transactions wait $\frac{\delta^c}{2}$ longer than in the coupled setting. Considering the worst-case scenario in which all transactions conflict, we conclude that the difference between the latency of the partially coupled and coupled settings is at most $\frac{\delta^c}{2}$.
\end{proof}

\subsection{Proof of~\Cref{thm:partialcoupledthroughput}}\label{app:partialcoupledthroughput}
\partialcoupledthroughput*

\begin{proof}
Note that in both the decoupled and partially coupled settings, block creation, consensus, and execution are performed in parallel and require the same amount of time. Thus, the only parameter that can affect throughput is the gas of the transactions included in each block. However, as long as in every round there exists a set $Tr$ in the mempool, with $N$ transactions, which are independent of pending blocks and $\gas(Tr,\st_t) \geq \maxgas$, blocks in the decoupled and partially coupled settings are both filled with the same amount of gas. Hence, under this assumption, the partially coupled setting achieves the same throughput as the decoupled setting. It follows that $\crRes = 1$. 
\end{proof}

\section{Case Study on gaslighting in Ethereum and Sui}
\label{app:casestudy}

Figure~\ref{fig:pseudocode_comparison} illustrates the attack outlined in Section~\ref{sec:attacks} using two example smart contracts. Both provide a cheap or expensive execution path depending on a variable $x$, yet are indistinguishable under static analysis. This shows that gaslighting cannot be mitigated through static analysis alone: honest interactions can produce misleading gas estimates. As long as attackers replicate real-world contracts and substitute analogous cheap and expensive operations, the attack remains undetectable by static analysis.

\begin{figure}[t!]
    \centering
    \begin{minipage}{0.45\textwidth}
    
        \begin{algorithm}[H]
        \scriptsize
        \caption*{NFT Minting}
        \begin{algorithmic}[1]
        \State $x = 0$
        \Function{mint\_nft}{}
            \If{$x > 100$}
                \State \textsc{Refund-User} \Comment{Cheap op}
                \State \Return
            \Else
                \State $x \gets x + 1$
                \State \textsc{Mint NFT} \Comment{Expensive op}
            \EndIf
        \EndFunction
        \Function{reopen-minting}{$start_{idx}$}
        \State{$x=start_{idx}$}
        \EndFunction
        \end{algorithmic}
        \end{algorithm}
    \end{minipage}
    \hfill
    \begin{minipage}{0.45\textwidth}
    
        \begin{algorithm}[H]
        \scriptsize
        \caption*{Attack Contract}
        \begin{algorithmic}[1]
        \State $x = 0$
        \Function{gaslight}{}
            \If{$x > 100$}
                \State \textsc{Cheap-Op}
                \State \Return
            \Else
                \State $x \gets x + 1$
                \State \textsc{Expensive-Op}
            \EndIf
        \EndFunction

        \Function{set-x}{$input_x$}
        \State{$x=input_x$}
        \EndFunction
        \end{algorithmic}
        \end{algorithm}
    \end{minipage}

    \caption{Comparison of NFT Minting vs Attack Smart Contract.}
    \label{fig:pseudocode_comparison}
\end{figure}

Furthermore, to underline the feasibility of these attacks, we implemented\footnote{\url{https://anonymous.4open.science/r/commons-analysis-46BB}} an example Solidity smart contract based on the pseudocode in Figure~\ref{fig:pseudocode_comparison} and measured the difference in gas costs between the two execution paths depending on the value of $x$. 
We deployed the contract in a local testnet and observed a gas consumption of $23{,}373$ for the early exit and $16{,}722{,}766$ (very close to the maximum per-transaction gas limit on Ethereum) for the long path. 
This corresponds to a factor of $715$, demonstrating that such attacks are indeed feasible in practical smart contract environments. 

In addition, we analyzed the discrepancy between the user-specified maximum gas and the actual gas consumed by transactions in practice, as this gap significantly affects the efficiency of the decoupled model. 
Specifically, we analyzed all Ethereum transactions invoking smart contracts between June and August 2025, totaling $90{,}223{,}198$ transactions. We found that, on average, declared gas exceeded the actual consumption by $63.9\%$.  
Given a fixed per-block gas limit and using max-gas to bound the maximum block complexity, this discrepancy reduces the speed-up of the decoupled approach from $5$ to $1.8$, which is already below the speed-up achievable with the partially coupled approach.  
Moreover, the 10th percentile of transactions in terms of overestimation exceeded the actual gas usage by more than $378\%$, and the 25th percentile still overestimated the gas usage by over $114\%$.

We further analyzed one million Sui checkpoints (from checkpoint  189,000,000 to 190,000,000) and measured the average gas usage per transaction~\footnote{\url{https://anonymous.4open.science/r/sui-A67E/crates/sui-data-ingestion-core/src/reader.rs\#L404}}
. On average, gas estimates on Sui exceeded actual consumption by $400\%$, with the lower 10th percentile still overestimating by $151\%$, much higher than the average overestimation observed on Ethereum. At the upper end, the 25th to 10th percentiles overestimated their usage by between $6{,}273\%$ and $20{,}203\%$.
These results highlight that, compared to Ethereum, Sui’s decoupled model exhibits substantially larger discrepancies between estimated and actual gas usage, making it significantly harder to mitigate attacks, as most transactions already vastly overestimate their gas requirements.

Additionally, we analyze the potential financial impact of a gaslighting attack. 
The attacker's success and the resulting cost depend primarily on the factor by which an adversary can maliciously overestimate gas.  
On Ethereum, we implemented an example that produced a factor of $\times715$. On Sui, the documented maximum gas budget is $50$ billion, while a simple smart-contract call may cost only $1{,}075{,}000$ gas; this yields a worst-case overestimation factor of approximately $\approx 46{,}512$~\cite{suigas}.

The financial impact of such an attack is determined mainly by the size of the rewards that accrue from transaction fees, which we refer to as \emph{execution-layer rewards}.  
At the time of writing, execution-layer rewards on Ethereum correspond to an APR of approximately $0.22\%$~\cite{RatedNetworkExplorer2025}.  
Given the total USD value of ETH currently staked at \$160.1 billion, the annual execution-layer rewards are approximately \$352 million. We use Ethereum as a reference point for execution layer rewards, as these are well documented and publicly available~\cite{RatedNetworkExplorer2025}.

Next, we estimate the potential impact of gaslighting attacks.  
In the worst case, considering the total execution-layer rewards and a maximum gas overestimation factor of $715$, a consistently successful gaslight attacker could capture a $\frac{714}{715}$ fraction of these rewards.  

For a $10\%$ adversary, this corresponds to approximately \$316 million under normal operation, and \$632 million under network congestion. During congestion, priority fees rise; it may take up to $10$ gwei for a transaction to be included by a proposer. We estimate that a $50$ gwei priority fee would ensure inclusion even under high congestion.  

Given that a small smart contract can be executed with $50{,}000$ gas, this places the cost of attacking each block for a year at roughly $6{,}500$ ETH (\$29 million) under worst-case conditions.  
Meanwhile, average execution-layer rewards per block are $0.022$ ETH, with approximately $2.6$ million blocks per year. Thus, the attacker could gain close to \$40 million in normal circumstances and up to \$110 million during periods of high congestion.

Although this represents the worst-case impact, block proposers and clients would likely notice anomalous behavior.  
We also estimate the effect of a ``silent attack,'' where an attacker fills only $10\%$ of the block space with low-value or garbage transactions.  
Compared to the worst case, this would result in approximately \$31.6 in damages, while the cost to the attacker would be roughly \$2.9 million.









\end{document}
\endinput